\newtheorem{theorem}{Theorem}
\newtheorem{proposition}{Proposition}
\newtheorem{lemma}{Lemma}
\theoremstyle{definition}
\newtheorem{definition}{Definition}
\theoremstyle{definition}
\theoremstyle{definition}
\theoremstyle{definition}
\newtheorem{claim}{Claim}
\theoremstyle{definition}
\newcommand{\psf}{\mathcal{P}}
\newcommand{\game}{\mathcal{G}}
\newcommand{\plrone}{\mathtt{A}}
\newcommand{\plrtwo}{\mathtt{B}}
\newcommand{\eloi}{\mathtt{A}}
\newcommand{\abel}{\mathtt{B}}
\newcommand{\player}{\mathtt{P}}
\newcommand{\adversary}{\overline{\mathtt{P}}}
 \newcommand{\tset}[1]{{[\![ #1 ]\!]}}
\newcommand{\mdl}{\mathfrak{M}}
\newcommand{\Gal}{[\mathtt{A}]}
\newcommand{\Gbo}{[\mathtt{B}]}
\newcommand{\Gpl}{[\mathtt{P}]}
\newcommand{\gal}{[\mathtt{A}]}
\newcommand{\gbo}{[\mathtt{B}]}
\newcommand{\gpl}{[\player]}
\newcommand{\model}{\mathfrak{M}}
\newcommand{\statekp}{(\widehat{\varphi},k+1)}
\newcommand{\bisimilar}{\;\underline{\longleftrightarrow}\;}
\title{A New Game Equivalence and its Modal Logic}
\author{Johan van Benthem \and Nick Bezhanishvili \and Sebastian Enqvist}
\title{A New Game Equivalence and its Modal Logic}
\author{Johan van Benthem
\institute{Institute for Logic, Language and Computation \\
University of Amsterdam, Netherlands \\
Department of
Philosophy, Stanford University, USA   \\
 Changjiang Scholar Program, Tsinghua University, China
} 
\email{J.vanBenthem@uva.nl}
\and
Nick Bezhanishvili 
\institute{Institute for Logic, Language and Computation \\
University of Amsterdam, Netherlands} 
\email{N.Bezhanishvili@uva.nl}
\and
Sebastian Enqvist
\institute{Department for Philosophy \\
Stockholm University, Sweden}
\email{thesebastianenqvist@gmail.com}
}
\begin{document}
\maketitle

\begin{abstract}
We revisit the crucial issue of natural game equivalences, and semantics of game logics based on these. We present reasons for investigating finer concepts of game equivalence than equality of standard powers, though staying short of modal bisimulation. Concretely, we propose a more fine-grained notion of equality of  `basic powers' which record what players can force plus what they leave to others to do, a crucial feature of interaction. This notion is closer to game-theoretic strategic form, as we explain in detail, while remaining amenable to logical analysis. We determine the properties of basic powers via a new representation theorem, find a matching `instantial neighborhood game logic', and show how our analysis can be extended to a new game algebra and dynamic game logic.
\end{abstract}

\section{Introduction } 

Games are a basic model for interactive agency, but how much structure do we want to consider? Game theory offers strategic form games and extensive games, which represent two levels of structure, less or more detailed.
Logic of games has also looked at other natural invariances between representations of games, such as equivalence of powers for players. As in other areas of mathematics, the search for natural invariances continues, and in this paper we offer a new notion bridging between game theory and logic: strong power equivalence, that uses powers encoding a sort of qualitative equilibria. We determine its properties in a new representation theorem for the ``basic powers'' in a game, show that it has a natural associated logic, and that it supports an interesting new game algebra where the methodological principle of compositionality eventually forces us to change from functional to relational strategies. Besides the representation theorem for basic powers, the main technical contribution of the paper is a completeness theorem for the new game logic that we define. The proof uses a technique developed in \cite{vB16}, but requires a non-trivial adaptation due to the presence of extra frame constraints.

We believe that our proposed 
game equivalence is new, but even so, it fits with a body of 
earlier work. Our approach is partly inspired by the extensive
computational literature on process equivalences, ranging 
from coarser trace equivalence to more fine-grained notions
of bisimulation \cite{bergstra2001handbook}. Even more central in our approach was the
by now standard notion of power equivalence, implicit in the 
game algebra of Parikh \cite{parikh85}, which also links with the set-
theoretic forms for games presented in \cite{bonanno1992set}. 
Another obvious precursor inside game theory is the
celebrated transformation analysis of equivalent games 
with imperfect information by Thompson \cite{thompson1997equivalence} (refined by
Elmes and Reny in \cite{elmes1994strategic}), which is close to power equivalence. 
But game theory also has comparative discussions of the 
information available in extensive forms and in strategic 
normal forms \cite{mail1994}, 
a style of analysis that remains to be connected to our 
representation theorems and logics for different levels 
of describing games. 

Finally, it should be said that further
intuitions of game equivalence emerge once we consider
players' preferences, so that game equivalence can also
refer to correlations between available equilibria. This further level is beyond the scope
of this paper, but it would be a natural next step to take.

\section{Powers }

\subsection{Powers and power equivalence}
We begin  by reviewing a standard logical notion of game equivalence in terms of \emph{powers} of the players. For a standard overview of the basic concepts of game theory, see \cite{osborne1994course}. For more on equivalence of extensive games, see \cite{bonanno1992set,elmes1994strategic,thompson1997equivalence}.

\begin{definition}
A \emph{tree} $\mathcal{T}$ is a prefix closed subset of $\mathbb{N}^*$, subject to the condition that if $w \cdot j \in \mathcal{T}$ and $i < j$ then $w \cdot i \in \mathcal{T}$ as well. The  empty word $\varepsilon$ is the \emph{root} of the tree.
\end{definition}
\begin{definition}
An \emph{extensive game} $\game$ for a finite set of players $A$ with outcomes in the set $O$ is a tuple $(\mathcal{T},t,o,\Pi)$ where $\mathcal{T}$ is a finite tree, $t$ a map from $\mathcal{T}$ to $A$, $o$ a map from branches of $\mathcal{T}$ to $O$, and $\Pi$ a partition of $\mathcal{T}$ subject to the following condition: for any pair $w,v$ within the same partition cell of $\Pi$, $w$ and $v$ have the same number of children in $\mathcal{T}$, and furthermore $t(n) = t(n')$. If all partition cells of $\Pi$ are singletons we call $\game$ a game of perfect information, and we omit $\Pi$. 

Maximal branches of $\mathcal{T}$ will also be called  \emph{full matches}, and prefixes of maximal branches are called \emph{partial matches}.

 A \emph{strategy} for player $a \in A$ is a map $\sigma : t^{-1}[a] \to \mathbb{N}$ where $w \cdot \sigma(w) $ is a child of $w$ for each $w$ with $t(w) = a$, and $\sigma(w) = \sigma(w')$ whenever $w,w'$ are in the same partition cell in $\Pi$. A \emph{strategy profile} is a tuple $(\sigma_a)_{a \in A}$ of one strategy for each player in $A$. A strategy profile $p$ completely determines a full match, and hence a leaf of the game tree, so we can speak of the outcome of $p$, denoting it by $o(p)$.  Generally, we say that a full match $m$ of $\game$  is \emph{guided} by the strategy $\sigma$ for $a$ if for every prefix $w$ of $m$ such that $t(w) = a$, $\sigma(w)$ is also a prefix of $m$. $\mathsf{Match}(\sigma)$ is the set of $\sigma$-guided matches. 
\end{definition}
We denote the set of games for players $A$ with outcomes  in $O$ as $\mathbb{G}(A,O)$. For two-player games we call the players by $\plrone$ (Alice) and $\plrtwo$ (Bob). We set $\overline{\eloi} = \abel$ and $\overline{\abel} = \eloi$. 

Note that we have not attributed payoffs to matches in a game or preferences over the outcomes, but rather (and more generally) simply outcomes from some fixed chosen set. In this sense we are dealing with \emph{game forms} rather than proper games. We return to the issue of preferences in Section 7.1.

\begin{definition}
Let $\game = (\mathcal{T},t,o,\Pi)$ be a game with outcomes in $O$.
A set $P \subseteq O$ is a \emph{power} of player $a \in A$ in the game $\game$ if there is a strategy $\sigma$ for $a$ in $G$ such that $o(m) \in P$ for every $\sigma$-guided match $m$. Given a player $a \in A$ we let $P_a(\game)$ denote the set of powers of $a $ in $\game$.

Two games $\game_1,\game_2 \in \mathbb{G}(A,O)$ are \emph{power equivalent} if for all $a \in A$: $P_a(\game_1) = P_a(\game_2)$.  We denote this  by $\game_1 \sim \game_2$. If $P_a(\game_1) = P_a(\game_2)$ for some specific $a \in A$, we write $\game_1 \sim_a \game_2$.
%
\end{definition} 

Every game $\game$ in $\mathbb{G}(A,O)$  gives rise to a tuple $(P_a(\game))_{a \in A}$ of subsets of $O$, which represents a crucial aspect of social scenarios: the abilities of participants to force outcomes. 

Powers in two-player games, our focus in what follows, are characterized by three formal properties, for a set of outcomes $O$, and a pair $F_\eloi,F_\abel$ of families of subsets of $O$:
\smallskip
\begin{description}
\item[Non-emptiness] For all $u \in W$ there are $Z,Z' \subseteq W$ such that $(u,Z) \in F_\eloi$ and $(u,Z')\in F_\abel$.
\item[Monotonicity] If $P \in F_\eloi$ ($P \in F_\abel$) and $P \subseteq Q \subseteq O$, then $Q \in F_\eloi$ ($Q \in F_\abel$).
 \item[Consistency] If $P \in F_\eloi$ and $Q \in F_\abel$, then $P \cap Q \neq \emptyset$.
\end{description}

\begin{theorem}
\label{knownrepresentation}
The families $F_\eloi,F_\abel \subseteq \psf (O)$ satisfy the Non-emptiness, Monotonicity and Consistency properties if, and only if, there exists a game $\game \in \mathbb{G}(\{\eloi,\abel\},O)$ such that $F_\eloi = P_\eloi(\game)$ and $F_\abel = P_\abel(\game)$. 
\end{theorem}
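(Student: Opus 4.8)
The plan is to prove both directions of the biconditional, with the forward (soundness) direction being routine and the backward (representation) direction requiring the actual construction.

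\medskip

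\noindent\textbf{Forward direction.} Suppose $\game \in \mathbb{G}(\{\eloi,\abel\},O)$ and set $F_\eloi = P_\eloi(\game)$, $F_\abel = P_\abel(\game)$. I would check the three properties directly. \emph{Non-emptiness}: any strategy $\sigma$ for $a$ exists (the game tree is finite and every internal node has a child), and $\mathsf{Match}(\sigma)$ is non-empty, so the set of all outcomes reachable along $\sigma$-guided matches is a power; in particular $O$ itself is always a power. \emph{Monotonicity}: if $\sigma$ witnesses $P \in P_a(\game)$ and $P \subseteq Q$, the same $\sigma$ witnesses $Q$. \emph{Consistency}: given $\sigma_\eloi$ witnessing $P$ and $\sigma_\abel$ witnessing $Q$, form the strategy profile $(\sigma_\eloi,\sigma_\abel)$; it determines a single full match $m$ which is guided by both strategies, so $o(m) \in P \cap Q$, hence $P \cap Q \neq \emptyset$.

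\medskip

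\noindent\textbf{Backward direction.} Given $F_\eloi, F_\abel \subseteq \psf(O)$ satisfying the three properties, I must build a game realizing them. The natural construction is a two-level tree: Alice moves first, choosing a set $P$ from some suitable family of ``Alice-sets''; then Bob moves, choosing a set $Q$ from a family of ``Bob-sets''; then an outcome in $P \cap Q$ is selected (this last step can be folded into Bob's move, or made a third trivial level). By Consistency $P \cap Q$ is non-empty so an outcome always exists; more carefully, I would let the leaves below the pair $(P,Q)$ be exactly the elements of $P\cap Q$, with $o$ the identity on them. The key is choosing \emph{which} sets Alice and Bob get to pick: the cleanest choice is to use the minimal elements of $F_\eloi$ (and of $F_\abel$), or simply all of $F_\eloi$ and $F_\abel$ — one then has to verify that the powers of this game are exactly $F_\eloi$ and $F_\abel$. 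Computing $P_\eloi(\game)$: a strategy for Alice picks some $P_0 \in F_\eloi$ at the root and then a response at each Bob-node; the set of outcomes she thereby forces is $\bigcup_{Q \in F_\abel} (\text{chosen outcome in } P_0 \cap Q)$, and by ranging over responses and appealing to Monotonicity one shows the powers obtained are exactly the supersets of members of $F_\eloi$, i.e. $F_\eloi$ itself. The symmetric argument handles Bob, using Consistency to ensure Alice cannot be ``trapped'' outside one of her sets.

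\medskip

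\noindent\textbf{Main obstacle.} The delicate point is verifying the \emph{inclusion} $P_\eloi(\game) \subseteq F_\eloi$ (and symmetrically for Bob): one must show that no strategy for Alice forces a set of outcomes that is \emph{not} already in $F_\eloi$. This is where Consistency does the real work — it guarantees that whatever set $P_0$ Alice commits to, and however she responds afterwards, Bob has a reply $Q$ with $P_0 \cap Q$ non-empty, so the outcome actually realized lies in $P_0$, forcing Alice's realized power to contain $P_0$ ``from the outside'' rather than escaping $F_\eloi$. Getting the bookkeeping right — making sure Bob's available sets are rich enough to pin Alice down to exactly her declared set, while Alice's sets are rich enough that every member of $F_\eloi$ is realized — is the crux, and is precisely the sort of finite combinatorial argument that the cited literature (\cite{bonanno1992set}, and the algebra of \cite{parikh85}) already contains in essence; I would adapt that bookkeeping here.
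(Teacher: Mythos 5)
Your forward direction is fine and is the standard argument. The backward direction, however, has a genuine gap (note the paper itself defers the proof to \cite{vB14}, so your construction must stand on its own): the theorem quantifies over $\mathbb{G}(\{\eloi,\abel\},O)$, which includes imperfect-information games, and it is false for perfect-information games without the extra Determinacy condition --- that is exactly why the paper states a separate theorem for that case. Your proposed game is a finite perfect-information tree (Alice picks a set, Bob picks a set, then an outcome in $P\cap Q$ is selected at a ``trivial'' third level), and by Zermelo's theorem the powers of any such game satisfy Determinacy. Hence no construction of this shape can realize, e.g., $O=\{1,2\}$, $F_\eloi=F_\abel=\{\{1,2\}\}$, which satisfies Non-emptiness, Monotonicity and Consistency but not Determinacy, and is realized only by a simultaneous-move (imperfect-information) game such as matching pennies. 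Concretely, in your tree whoever moves at the third level can force a singleton, so that player acquires a power outside the given family.

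Even apart from determinacy, your verification of the crucial inclusion $P_\eloi(\game)\subseteq F_\eloi$ does not go through: if Alice declares $P_0$, the set of outcomes actually consistent with her strategy is some $S\subseteq P_0$, and since powers are upward closed by definition, $S$ itself is a power of Alice; nothing in the three conditions guarantees $S\in F_\eloi$ (your phrase ``forcing Alice's realized power to contain $P_0$'' has the inclusion backwards --- the danger is that she forces \emph{less} than any set in $F_\eloi$, i.e.\ a smaller set). The repair needs both imperfect information and the kind of padding that makes each player's realized outcome sets exactly right. One clean route, consistent with the paper: observe that non-empty monotone families automatically satisfy Instantiatedness (since $O\in F_\abel$ and $O\in F_\eloi$), so the strategic-form construction used for Theorem \ref{representation} (with $\Sigma_\abel=F_\abel\times O\times\{0,1\}$ and constrained choice functions for $\eloi$) yields a game with $B_\eloi(\game)=F_\eloi$ and $B_\abel(\game)=F_\abel$; since the powers of a game are the upward closure of its basic powers, Monotonicity then gives $P_\player(\game)=F_\player$ for both players. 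That construction (or the one in \cite{vB14}) supplies precisely the bookkeeping your sketch leaves open.
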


A representation theorem also holds for perfect information games, with this additional property:
\smallskip
\begin{description}
\item[Determinacy] For all sets $P \subseteq O$, either $P \in F_\eloi$ or \\
$O\setminus P \in F_\abel$. 
\end{description}

\begin{theorem}
The families $F_\eloi,F_\abel \subseteq \psf (O)$ satisfy the Non-emptiness, Monotonicity, Consistency and Determinacy properties if, and only if, there exists a perfect information game $\game \in \mathbb{G}(\{\eloi,\abel\},O)$ such that $F_\eloi = P_\eloi(\game)$ and $F_\abel = P_\abel(\game)$. 
\end{theorem}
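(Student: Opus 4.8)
The plan is to prove the two directions of the biconditional separately, building on Theorem~\ref{knownrepresentation} wherever possible.

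\medskip

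\noindent\textbf{From games to families.} Suppose $\game \in \mathbb{G}(\{\eloi,\abel\},O)$ is a perfect information game, and set $F_\eloi = P_\eloi(\game)$, $F_\abel = P_\abel(\game)$. By Theorem~\ref{knownrepresentation} these already satisfy Non-emptiness, Monotonicity and Consistency, so the only thing to check is Determinacy. Here I would invoke the classical determinacy argument for finite perfect-information games (a Zermelo/backward-induction style argument, which is available since $\mathcal{T}$ is finite and, with $\Pi$ trivial, every player sees the full history). Concretely, fix $P \subseteq O$ and consider the derived two-outcome game in which a match is ``good'' iff its outcome lies in $P$. Finite perfect information games of this form are determined: either $\eloi$ has a strategy forcing the outcome into $P$ — giving $P \in F_\eloi$ — or $\abel$ has a strategy forcing it into $O \setminus P$ — giving $O \setminus P \in F_\abel$. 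This is exactly Determinacy. The induction is on the height of the tree, using at each internal node owned by $\eloi$ the existence of at least one child from which $\eloi$ wins (else $\abel$ wins from all children, hence from the node), and dually for nodes owned by $\abel$; leaves are handled by whether their outcome is in $P$.

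\medskip

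\noindent\textbf{From families to games.} Conversely, assume $F_\eloi, F_\abel$ satisfy all four properties. Applying Theorem~\ref{knownrepresentation} to the first three properties already yields \emph{some} game $\game$ with $P_\eloi(\game) = F_\eloi$ and $P_\abel(\game) = F_\abel$, but this $\game$ need not be of perfect information. The task is therefore to produce a perfect-information representative. I would inspect the construction underlying Theorem~\ref{knownrepresentation}: a natural realisation builds a game whose matches (after a move by each player) correspond to pairs — one ``challenge'' set chosen by each player — and outcomes are read off from their intersection. Such a construction is essentially already a perfect-information game tree (Alice picks a set $P$ from a suitable basis of $F_\eloi$, then Bob, seeing it, picks $Q$ from a basis of $F_\abel$, then an outcome in $P \cap Q$ is selected), and Determinacy is precisely what guarantees the powers come out exactly right rather than too large. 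Alternatively, and perhaps more cleanly, one can build the game directly: let Alice move first choosing any $P \in F_\eloi$, then Bob choose any outcome $o \in P$; Monotonicity and Consistency make $P_\eloi$ of this game equal to the monotone closure of the chosen sets, and Determinacy is exactly the statement that $P_\abel$ of this game (the sets $O \setminus P$ that Bob can force, together with their supersets) coincides with $F_\abel$. I would verify the two inclusions $P_\eloi(\game) = F_\eloi$ and $P_\abel(\game) = F_\abel$ by unwinding the definitions of power and strategy in this small two-level tree.

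\medskip

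\noindent\textbf{Main obstacle.} The routine half is ``games to families'': finite perfect-information determinacy is standard. The delicate half is ``families to games'': one must present a single perfect-information game that realises \emph{both} $F_\eloi$ and $F_\abel$ exactly, and the subtlety is that a game which obviously realises $F_\eloi$ might yield a $P_\abel$ strictly larger than $F_\abel$ (or vice versa). The key lemma to isolate is that, under Determinacy, the sets Bob can force in the ``Alice-picks-then-Bob-picks'' game are exactly the complements-closed-upwards of the sets Alice \emph{cannot} force, i.e.\ that $Q \in F_\abel$ iff $Q \cap P \neq \emptyset$ for all $P \in F_\eloi$ — the ``only if'' being Consistency and the ``if'' being a contrapositive use of Determinacy together with Monotonicity. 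Getting this characterisation of $F_\abel$ purely from the four axioms, and then matching it against the game-theoretic powers of the constructed tree, is where the real work lies.
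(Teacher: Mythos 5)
Your sketch is correct: the Zermelo/backward-induction argument gives Determinacy for finite perfect-information games, and your ``cleaner'' two-level construction (Alice first picks some $P \in F_\eloi$, then Bob, seeing it, picks an outcome in $P$) realises both families exactly, with the key lemma that $Q \in F_\abel$ iff $Q \cap P \neq \emptyset$ for all $P \in F_\eloi$ following from Consistency (only if) and a contrapositive use of Determinacy (if), just as you isolate it. The paper itself omits the proof and defers to \cite{vB14}, where essentially this same two-move construction is used, so your route coincides with the intended one; the only small corrections are that Monotonicity is what gives $P_\eloi(\game) = F_\eloi$ (Determinacy is not needed there), Determinacy alone settles $P_\abel(\game) \subseteq F_\abel$ without Monotonicity, and Non-emptiness plus Consistency are needed to ensure every $P \in F_\eloi$ is non-empty so that Bob always has a move (with $O$ finite so the tree is finite).
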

For proofs of these results we refer to \cite{vB14}.

\subsection{Neighborhood logic and bisimilarity }

The use of neighborhood semantics to interpret a propositional dynamic logic of powers in determined games dates back to \cite{parikh85}. Here, we review a modal logic GL for powers in two-player games from \cite{vB14}, which drops the game constructions (for these, see Section 6) as well as determinacy, referring explicitly to powers of separate players in the syntax.

Given a set of propositional variables $\mathsf{Prop}$, the syntax of GL is given by the following grammar:
  
$$\varphi := p  \in \mathsf{Prop} \mid   \varphi \wedge \varphi \mid \neg \varphi \mid \gal \varphi \mid \gbo \varphi $$

\medskip

The semantics for this logic uses neighborhood models that assign each player a neighborhood relation representing the powers of that player relative to each world. Of course, we must impose suitable constraints to ensure that these actually behave as powers of players in some game. The representation result Theorem \ref{knownrepresentation} tells us what these  should be:

\begin{definition}
 A  \emph{game frame} is a triple $(W,R_\eloi,R_\abel)$ such that $W$ is a set and $R_\player \subseteq W \times \psf W$ for each player $\player \in \{\eloi,\abel\}$, and such that for all  $u \in W$ the pair $R_\eloi[u],R_\abel[u]$ satisfies the Non-emptiness, Monotonicity and Consistency conditions (with $W$ viewed as the set of outcomes $O$, so that $R_\eloi[u],R_\abel[u]$ make up two families of sets of outcomes). 
A \emph{game model} is a game frame together with a valuation $V : \mathsf{Prop} \to \psf W$.
\end{definition}
We define the interpretations of all formulas in a game model $\mdl = (W,R_\eloi,R_\abel,V)$ as follows:
\\\\
\; (a) $\tset{p} = V(p)$, \; (b) $\tset{\varphi \wedge \psi} = \tset{\varphi} \cap \tset{\psi}$, \; (c) $\tset{\neg \varphi} = W \setminus \tset{\varphi}$,
\\\\
\; and crucially, for the modality: (d) $\tset{\Gpl \varphi} = R^{-1}_\player[\tset{\varphi}]$. Note that the Monotonicity condition makes this equivalent to: $u \in \tset{\Gpl \varphi} $ iff there exists $Z \subseteq \tset{\varphi}$ with $u R_\player Z$.

\smallskip
We write $\model,v \Vdash \varphi$ for $v \in \tset{\varphi}$, and $\vDash \varphi$ (`$\varphi$ is \emph{valid}') if, for every game model $\model$ and $v \in W$, we have $\model,v \Vdash \varphi$. 

Game models come with a natural notion of bisimulation:
\begin{definition}
Let $\mdl = (W,R,V)$, $\mdl' = (W',R',V')$ be game models. 
The relation $B \subseteq W \times W'$ is said to be a \emph{power bisimulation} if, for all $uBu'$  and each $\player \in \{\eloi,\abel\}$, we have:
\begin{description}
\item[Forth] For all $Z$ such that $u R_\player Z$, there exists a $Z'$ such that $u' (R_\player') Z'$ and the following condition holds:
\item[Forth-Back] For all $v' \in Z'$ there is some $v \in Z$ such that $v B v'$.
\item[Back] For all $Z'$ such that $u' R_\player Z'$ there is some $Z$ such that $u R_\player Z$ and the following condition holds:
\item[Back-Forth] For all $v \in Z$ there is some $v' \in Z' $ such that $v B v'$.
\end{description}
We say that pointed game models $\mdl,w$ and $\mathfrak{N},v$ are \emph{power bisimilar}, written  $\mdl,w \bisimilar \mathfrak{N},v$, if there is a  power bisimulation $B$ between $\mdl$ and $\mathfrak{N}$ such that $wBv$.
\end{definition}
All formulas of GL are invariant for power bisimilarity:
\begin{proposition}
If $\mdl,w \bisimilar \mathfrak{N},v$ then $\mdl,w \Vdash \varphi$ iff $\mathfrak{N},v \Vdash \varphi$, for each formula $\varphi$ of GL.
\end{proposition}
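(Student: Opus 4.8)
The plan is to proceed by a straightforward induction on the structure of $\varphi$, carried out simultaneously for every pair of states related by a fixed power bisimulation. Concretely, let $B$ be a power bisimulation between $\mdl = (W,R_\eloi,R_\abel,V)$ and $\mathfrak{N} = (W',R'_\eloi,R'_\abel,V')$, and suppose $wBw'$; I will show $\mdl,w \Vdash \varphi$ iff $\mathfrak{N},w' \Vdash \varphi$ by induction on $\varphi$, whence the statement follows by taking the link $wBv$ guaranteed by $\mdl,w \bisimilar \mathfrak{N},v$. The propositional-variable case uses the harmony requirement that $B$-linked states satisfy the same proposition letters: $\mdl,w \Vdash p$ iff $w \in V(p)$ iff $w' \in V'(p)$ iff $\mathfrak{N},w' \Vdash p$. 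The Boolean cases $\varphi = \psi \wedge \chi$ and $\varphi = \neg\psi$ are immediate from the induction hypothesis, since the semantic clauses for $\wedge$ and $\neg$ are pointwise.

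The only case that uses the specific shape of power bisimulations is $\varphi = \Gpl \psi$ with $\player \in \{\eloi,\abel\}$. Assume $\mdl,w \Vdash \Gpl\psi$. By the reformulated semantic clause (the one licensed by Monotonicity of the frame) there is a set $Z$ with $w R_\player Z$ and $Z \subseteq \tset{\psi}^{\mdl}$. Applying the \textbf{Forth} clause of $B$ at $wBw'$ to this $Z$, we obtain a set $Z'$ with $w' R'_\player Z'$ satisfying the \textbf{Forth-Back} condition, i.e.\ every $z' \in Z'$ is $B$-related to some $z \in Z$. Fix an arbitrary $z' \in Z'$ and a witnessing $z \in Z$; then $z \in \tset{\psi}^{\mdl}$, so $\mdl, z \Vdash \psi$, hence by the induction hypothesis $\mathfrak{N}, z' \Vdash \psi$, i.e.\ $z' \in \tset{\psi}^{\mathfrak{N}}$. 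Since $z'$ was arbitrary, $Z' \subseteq \tset{\psi}^{\mathfrak{N}}$, and as $w' R'_\player Z'$ the reformulated clause yields $\mathfrak{N}, w' \Vdash \Gpl\psi$. The converse implication is proved in exactly the same way with the roles of $\mdl$ and $\mathfrak{N}$ interchanged, using the \textbf{Back} and \textbf{Back-Forth} clauses in place of \textbf{Forth} and \textbf{Forth-Back}.

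No step here is a genuine obstacle; the one point worth understanding is why the `one-sided' matching in \textbf{Forth-Back} and \textbf{Back-Forth} — which only demands that each element of one neighbourhood be $B$-related to \emph{some} element of the other — already suffices. The reason is that $\Gpl$ is a monotone neighbourhood modality: its truth at a state asks merely for the existence of \emph{one} neighbourhood all of whose members satisfy $\psi$, and the Monotonicity frame condition means the transported set $Z'$ never has to coincide with a prescribed neighbourhood of $\psi$-states — it only has to be contained in $\tset{\psi}^{\mathfrak{N}}$. In particular, Non-emptiness and Consistency play no role in this argument; only Monotonicity is used, and only via the reformulated clause for $\Gpl$.
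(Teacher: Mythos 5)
Your proof is correct and is exactly the standard structural induction that the paper leaves implicit (the proposition is stated there without proof): the \textbf{Forth}/\textbf{Forth-Back} clauses handle one direction of the $\Gpl$ case and \textbf{Back}/\textbf{Back-Forth} the other, with Monotonicity licensing the reformulated semantic clause, just as intended. One caveat: your base case appeals to atomic harmony ($w \in V(p)$ iff $v \in V'(p)$ for $B$-related states), a clause the paper's printed definition of power bisimulation omits; since the proposition fails without it, your tacit addition is the correct reading of the definition rather than a defect of your argument, but it is worth noting that you are supplying it.
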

The logic GL can be axiomatized by a simple extension of monotone (multi-)modal logic. Here is a version using axiom schemata and a rule of replacement of equivalents:

\subsubsection*{Axioms for GL}
\begin{description}
\item[Non-Em] $\Gpl\top$
\item[Mon] $\gpl \varphi \rightarrow \gpl  (\varphi \vee \psi)$
\item[Cons] $\Gpl \varphi \rightarrow \neg [\adversary] \neg \varphi$
\end{description}
\subsubsection*{Proof rules}
\begin{description}
\item[MP] $$\frac{\varphi \rightarrow \psi \quad \quad \varphi}{\psi}$$
\item[RE] $$\frac{\varphi \leftrightarrow \psi \quad\quad \theta}{\theta[\varphi/\psi]}$$
where $\theta[\varphi/\psi]$ is the result of substituting some \\ occurrences of the formula $\psi$ by $\varphi$ in $\theta$.
\end{description}
\smallskip
We denote this system of axioms by $\mathbf{GL}$ and write $\mathbf{GL} \vdash \varphi$ to say that the formula $\varphi$ is provable in this axiom system.
\begin{theorem}
\label{completenessbasic}
The logic $\mathbf{GL}$ is sound and complete for validity on game frames.
\end{theorem}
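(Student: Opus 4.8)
The plan is a canonical-model completeness argument of the kind standard for congruential (``classical'') modal logics, with the extra effort going into checking that the canonical frame really is a game frame. Soundness of $\mathbf{GL}$ is routine: \textbf{Non-Em} is valid because Non-emptiness supplies a neighbourhood contained in $\tset{\top}=W$; \textbf{Mon} is valid because $\tset{\varphi}\subseteq\tset{\varphi\vee\psi}$ and the semantics of $\Gpl$ is monotone; \textbf{Cons} is valid because a neighbourhood inside $\tset{\varphi}$ and one inside $\tset{\neg\varphi}$ would be disjoint, contradicting Consistency; and \textbf{MP}, \textbf{RE} preserve validity, the latter since the semantics is compositional, so provably equivalent formulas have equal truth sets in every model.

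For completeness I would first derive two syntactic facts inside $\mathbf{GL}$. First, the \emph{monotonicity rule}: if $\mathbf{GL}\vdash\varphi\to\psi$ then $\mathbf{GL}\vdash\Gpl\varphi\to\Gpl\psi$ --- indeed $\varphi\to\psi$ yields $\vdash(\varphi\vee\psi)\leftrightarrow\psi$, \textbf{Mon} yields $\Gpl\varphi\to\Gpl(\varphi\vee\psi)$, and \textbf{RE} lets us rewrite $\varphi\vee\psi$ to $\psi$. Second, by iterating \textbf{RE} the logic is congruential. Now define the canonical model $\mdl^{c}=(W^{c},R^{c}_{\eloi},R^{c}_{\abel},V^{c})$: let $W^{c}$ be the set of maximal $\mathbf{GL}$-consistent sets, $V^{c}(p)=\{\Gamma:p\in\Gamma\}$, write $\widehat{\varphi}=\{\Delta\in W^{c}:\varphi\in\Delta\}$ for the proof set, and for each player $\player$ set
\[
\Gamma\,R^{c}_{\player}\,X \iff \text{there is a formula }\varphi\text{ with }\Gpl\varphi\in\Gamma\text{ and }\widehat{\varphi}\subseteq X .
\]
By construction $R^{c}_{\player}[\Gamma]$ is upward closed. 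The Truth Lemma $\mdl^{c},\Gamma\Vdash\varphi\iff\varphi\in\Gamma$ is shown by induction; in the modal step, the direction $\Gpl\varphi\in\Gamma\Rightarrow\Gamma\in\tset{\Gpl\varphi}$ uses the witness $\widehat{\varphi}$ directly (plus the induction hypothesis $\tset{\varphi}=\widehat{\varphi}$), and the converse uses that on maximal consistent sets $\widehat{\psi}\subseteq\widehat{\varphi}$ implies $\mathbf{GL}\vdash\psi\to\varphi$, combined with the monotonicity rule.

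It then remains to verify that $\mdl^{c}$ is a game model, i.e.\ that at every $\Gamma$ the pair $R^{c}_{\eloi}[\Gamma],R^{c}_{\abel}[\Gamma]$ satisfies Non-emptiness, Monotonicity and Consistency. Monotonicity is exactly the upward closure built into the definition. For Non-emptiness, $\Gpl\top\in\Gamma$ by \textbf{Non-Em}, so $W^{c}=\widehat{\top}$ belongs to both $R^{c}_{\eloi}[\Gamma]$ and $R^{c}_{\abel}[\Gamma]$. For Consistency, suppose $P\in R^{c}_{\eloi}[\Gamma]$ and $Q\in R^{c}_{\abel}[\Gamma]$, witnessed by $\gal\varphi\in\Gamma$ with $\widehat{\varphi}\subseteq P$ and $\gbo\psi\in\Gamma$ with $\widehat{\psi}\subseteq Q$; if $P\cap Q=\emptyset$ then $\widehat{\varphi}\cap\widehat{\psi}=\emptyset$, so $\mathbf{GL}\vdash\psi\to\neg\varphi$, whence $\mathbf{GL}\vdash\gbo\psi\to\gbo\neg\varphi$ by the monotonicity rule and so $\gbo\neg\varphi\in\Gamma$, while \textbf{Cons} gives $\neg\gbo\neg\varphi\in\Gamma$ from $\gal\varphi\in\Gamma$ --- a contradiction. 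Hence every $\mathbf{GL}$-consistent formula belongs to some maximal consistent set and, by the Truth Lemma, is satisfiable in the game model $\mdl^{c}$; contraposing yields completeness.

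I do not expect a genuine obstacle for this basic logic --- the construction is routine once set up correctly. The two points deserving care are (i) extracting the monotonicity rule from the \emph{axiom} \textbf{Mon} (rather than assuming it as a primitive rule), since it is used both in the Truth Lemma and in checking Consistency, and (ii) defining the canonical neighbourhood relations as the upward closure of the proof sets of forced formulas so that the Truth Lemma and \emph{all three} frame conditions hold of $\mdl^{c}$ simultaneously; with monotone relations these requirements are mutually compatible, so the argument goes through. (The serious adaptation of the technique alluded to in the introduction arises only later, for the richer instantial neighbourhood game logic with its additional frame constraints.)
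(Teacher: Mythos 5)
Your proof is correct and follows exactly the route the paper has in mind: the paper omits the argument, describing it as ``a straightforward canonical model construction,'' and your supplemented canonical model (upward-closed neighbourhood relations from proof sets of forced formulas), derived monotonicity rule, Truth Lemma, and verification of Non-emptiness, Monotonicity and Consistency at each maximal consistent set supply precisely that construction. Nothing further is needed.
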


The completeness proof is an exercise involving a straightforward canonical model construction, which we omit. Furthermore, $\mathbf{GL}$ is decidable and has the finite model property.

\section{Rethinking powers and  game equivalence}

\subsection{From powers to strategic equivalence}
\label{s:strat-equivalence}

Power equivalence, while a natural and simple notion of game equivalence, is relatively coarse. In particular, it misses much of the \emph{interactive} nature of games. To illustrate what we mean by this, here is an example from \cite{vB14}.

\smallskip

Consider the two games depicted in Figure \ref{twogames}. In both games, each player can perform two actions ``left'' and ``right'', and there are three possible outcomes $1,2,3$. If Alice moves left, then the outcome is $1$ regardless of the action chosen by Bob, but if Alice moves right, the outcome depends on the actions of Bob: if Bob moves left, the outcome is $2$, otherwise $3$. The difference lies in which player moves first. In the figure, the game where Alice chooses first is depicted to the left, and the game where Bob chooses first to the right.  

\vspace{-0.2cm}

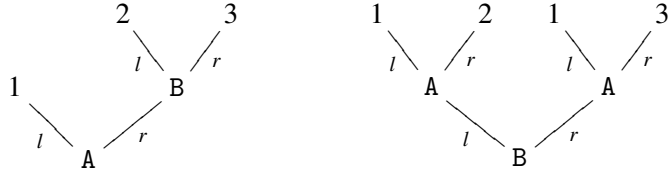
\begin{figure}[H]
\[
\xymatrixrowsep{0.5cm}
\xymatrixcolsep{0.05cm}
 \xymatrix{& & & &  & & 2 & & & & 3    \\
 & &   1    & & & & & &    \abel \ar@{-}^l[ull]\ar@{-}_r[urr] & &    \\
 & & & & & \eloi \ar@{-}^l[ulll] \ar@{-}_r[urrr] & &  & &  &  }
\quad\quad\quad \quad
\xymatrixrowsep{0.5cm}
\xymatrixcolsep{0.05cm}
 \xymatrix{1 & & & & 2 & & 1 & & & & 3    \\
 & &   \eloi \ar@{-}^l[ull] \ar@{-}_r[urr]    & & & & & &    \eloi \ar@{-}^l[ull]\ar@{-}_r[urr] & &    \\
 & & & & & \abel \ar@{-}^l[ulll] \ar@{-}_r[urrr] & &  & &  &  }
\]
\caption{
\label{twogames}
Two power equivalent games.}
\end{figure}

\vspace{-0.2cm}

It is easy to see that each player has the same powers in both games, and this is the basis for standard game logics (the games represent two sides of a standard propositional distribution law). But the interaction of the players looks different: in the right game, $\eloi$ has an obvious strategy for which the possible outcomes are precisely $1$ and $2$. But in the left game, the only way that $\eloi$ can exclude the outcome $3$ is to go left at the start of the game, making $1$ the only possible outcome. Thus, it is doubtful if one should see these games as equivalent. 

Another way of phrasing the difference is this. The two games differ if we think of powers more `socially' as what a player is going to force while at the same time recording which choices are left intentionally to \emph{the other player}. That is, both players have a say, and the notion of power becomes oriented toward both players, more in the spirit of game-theoretic equilibrium. This intuition can be made a bit more precise if we bring in a standard game-theoretic device. Let us display the  strategic forms of the two games, with rows corresponding to strategies for $\eloi$ and columns  strategies of $\abel$:

\smallskip

\[
\begin{array}{ | c | c | }
  \hline 1 & 1 \\ \hline
 2 & 3 \\ \hline
\end{array}
\quad\quad
\begin{array}{ | c | c | }
  \hline 1 & 1 \\ \hline
 2 & 1 \\ \hline
1 & 3 \\ \hline
2 & 3 \\ \hline
\end{array}
\]

\medskip

Looking at the yields of columns and rows, the above difference is clear. Here is a finer notion of game equivalence, inspired by the `matrix logic' of \cite{van2011toward}:
 
\begin{definition}
Let $\game_1$ and $\game_2$ be two-player games over the set of outcomes $O$. A \emph{strategy profile bisimulation} between these two games is a relation $R \subseteq P_1 \times P_2$, where $P_1$ is the set of strategy profiles of $\game_1$ and $P_2$ is the set of strategy profiles of $\game_2$, such that if $(\sigma_1,\tau_1) R (\sigma_2,\tau_2)$, then:
\smallskip
\begin{description}
\item[Atomic] $o_1(\sigma_1,\tau_1) = o_2(\sigma_2,\tau_2)$,
\item[Forth$(\eloi)$] For all strategies $\sigma_1'$ for $\eloi$ in $\game_1$ there is some strategy $\sigma_2'$ in $\game_2$ with $(\sigma_1',\tau_1) R (\sigma_2',\tau_2)$,
\item[Back$(\eloi)$]  For all strategies $\sigma_2'$ for $\eloi$ in $\game_2$ there is some strategy $\sigma_1'$ in $\game_1$ with $(\sigma_1',\tau_1) R (\sigma_2',\tau_2)$ 
\item[Forth$(\abel)$] For all strategies $\tau_1'$ for $\abel$ in $\game_1$ there is some strategy $\tau_2'$ in $\game_2$ with $(\sigma_1,\tau_1') R (\sigma_2,\tau_2')$,
\item[Back$(\abel)$]  For all strategies $\tau_2'$ for $\abel$ in $\game_2$ there is some strategy $\tau_1'$ in $\game_1$ with $(\sigma_1,\tau_1') R (\sigma_2,\tau_2')$
\end{description}
\smallskip
We call $\game_1$ and $\game_2$ are \emph{strategic form equivalent} if there is a strategy profile bisimulation $R$ between them, relating every profile in $P_1$ to some profile in $P_2$, and vice versa.
\end{definition}  
This equivalence concept is more fine-grained than power equivalence. In particular, the power equivalent games displayed in Figure \ref{twogames} are not strategic form equivalent, as can be seen by inspecting their matrix forms. However, this approach sacrifices much of the logical simplicity of power equivalence. We therefore proceed to modify the notion of power itself in line with the above strategic form perspective.

\subsection{Basic powers and strong equivalence}

Our proposed new game equivalence works as follows.

\begin{definition}
Let $\game$ be any game in $\mathbb{G}(A,O)$, let $a \in A$. A power $P \subseteq O$ is said to be a  \emph{basic power} for $a$ in $\game$ if there is a strategy $\sigma$ for $a$ in $\game$ such that $P = \{ o(m) \mid m \in \mathsf{Match}(\sigma) \}$.
The set of all basic powers of $a$ in $\game$ is denoted by $B_a(\game)$. 
\end{definition}
\begin{definition}
Two games $\game_1$ and $\game_2$ are \emph{strongly power equivalent}, written $\game_1 \simeq \game_2$, iff $B_a(\game_1) = B_a(\game_2)$ for all $a \in A$. We write $ \game_1 \simeq_a \game_2$ to say that $B_a(\game_1) = B_a(\game_2)$. 
\end{definition} 
Strong power equivalence is more fine-grained than power equivalence: the two games in Figure \ref{twogames} are not strongly power equivalent. It also retains a connection to strategic forms.

\begin{proposition}
Any two strategic form equivalent games are strongly power equivalent, and any two strongly power equivalent games are power equivalent.
\end{proposition}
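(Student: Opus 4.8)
The plan is to prove the two implications separately, each by a direct argument that unpacks the relevant definitions.

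For the first implication — strategic form equivalence implies strong power equivalence — I would fix a strategy profile bisimulation $R$ between $\game_1$ and $\game_2$ witnessing strategic form equivalence, and fix a player, say $\eloi$ (the argument for $\abel$ is symmetric). Given a basic power $P \in B_\eloi(\game_1)$, witnessed by a strategy $\sigma_1$ for $\eloi$, I want to produce a strategy $\sigma_2$ for $\eloi$ in $\game_2$ whose set of possible outcomes is exactly $P$. The idea is: pick any profile $(\sigma_1, \tau_1)$ and use the surjectivity clause of strategic form equivalence to find $(\sigma_2, \tau_2)$ with $(\sigma_1,\tau_1) R (\sigma_2,\tau_2)$; this $\sigma_2$ is the candidate. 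To show $\{o_2(m) \mid m \in \mathsf{Match}(\sigma_2)\} = P$, I would observe that outcomes of $\sigma_2$-guided matches are exactly the outcomes $o_2(\sigma_2, \tau_2')$ as $\tau_2'$ ranges over $\abel$-strategies, and similarly for $\sigma_1$; then \textbf{Forth}$(\abel)$ and \textbf{Back}$(\abel)$ applied from a pair with first coordinates $\sigma_1,\sigma_2$, together with the \textbf{Atomic} clause, give the inclusion both ways. The one technical point to handle with care is that $R$ relates the \emph{fixed} pair $(\sigma_1,\tau_1)$ to $(\sigma_2,\tau_2)$, so to move $\tau_1$ around I must first use \textbf{Forth}$(\eloi)$/\textbf{Back}$(\eloi)$ or note that $R$ restricted to profiles with a fixed $\eloi$-strategy still satisfies the $\abel$-clauses — essentially one needs that $(\sigma_1,\tau_1')R(\sigma_2,\tau_2')$ for suitable $\tau_2'$ for \emph{every} $\tau_1'$, which follows by iterating the Back/Forth clauses from the anchoring pair.

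For the second implication — strong power equivalence implies power equivalence — the key observation is that every power is an upward closure of a basic power. Concretely, $P \in P_a(\game)$ iff there is a strategy $\sigma$ with $o(m) \in P$ for all $\sigma$-guided $m$, which holds iff $\{o(m) \mid m \in \mathsf{Match}(\sigma)\} \subseteq P$, i.e.\ iff there is a basic power $P' \in B_a(\game)$ with $P' \subseteq P$. Thus $P_a(\game)$ is determined by $B_a(\game)$ (it is its upward closure in $\psf(O)$), so $B_a(\game_1) = B_a(\game_2)$ immediately yields $P_a(\game_1) = P_a(\game_2)$.

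I expect the main obstacle to be entirely in the first implication: specifically, the bookkeeping needed to go from a single related pair $(\sigma_1,\tau_1) R (\sigma_2,\tau_2)$ to a statement quantifying over all opponent strategies, so that the sets of reachable outcomes under $\sigma_1$ and under $\sigma_2$ can be matched. Everything else — the identification of $\mathsf{Match}(\sigma)$-outcomes with outcomes of profiles extending $\sigma$, and the second implication — is routine unwinding of definitions.
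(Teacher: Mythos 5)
Your proof is correct, and it is the routine argument the paper intends (the proposition is stated there without a proof): powers are exactly the supersets of basic powers, and a strategy profile bisimulation transfers basic powers between the games. One small simplification for the step you flag as delicate: no iteration of the clauses is needed, since \textbf{Forth}$(\abel)$ and \textbf{Back}$(\abel)$ keep both first coordinates fixed, so a single application at the anchored pair $(\sigma_1,\tau_1)\,R\,(\sigma_2,\tau_2)$, combined with \textbf{Atomic}, already shows that the outcomes realizable against $\sigma_1$ and against $\sigma_2$ coincide (and the identification of these with the outcomes of guided matches is exactly the paper's standing convention of passing between extensive and strategic form).
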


All inclusions are strict here. Here are two games that are strongly power equivalent but not strategic form equivalent -- displayed in strategic form with outcome set $\{0,1\}$:
\smallskip
\[
\begin{array}{ | c | c | c |}
  \hline 0 & 1  & 0\\ \hline
 1 & 0 & 0  \\ \hline
 0 & 0 & 0  \\ \hline
\end{array}
\quad\quad
\begin{array}{ | c | c | c | }
  \hline 1 & 1 & 0\\ \hline
 0 & 0 & 0  \\ \hline
\end{array}
\]

\medskip

In the matrix on the right, the profile in the middle upper square is not bisimilar with any  profile on the left.

\medskip

As a prelude to our later logical analysis, we generalize strong power equivalence to games with different outcomes. Given $\game_1 \in \mathbb{G}(A,O_1)$ and $\game_2 \in \mathbb{G}(A,O_2)$, $R  \subseteq O_1 \times O_2$ is a \emph{strategy bisimulation} between $\game_1$ and $\game_2$ if, for all $a \in A$:

\smallskip

\begin{description}
\item[Forth] For all $Z_1 \in B_a(\game_1) $, there exists $Z_2 \in B_a(\game_2)$ such that $Z_1 \tilde{R} Z_2$,
\item[Back] For all $Z_2 \in B_a(\game_2) $, there exists $Z_1 \in B_a(\game_1)$ such that $Z_1 \tilde{R} Z_2$,
\end{description}

\smallskip

\noindent where $\tilde{R}$ is the \emph{Egli-Milner lifting} of $R$. I.e., $Z \tilde{R} Z' $ if, for all $x \in Z$, there is $x' \in Z'$ with $x R x'$, and vice versa. It is clear that strong power equivalence is a special case of this.

\medskip

Having proposed our new notion of game equivalence, we now determine its basic properties. This is the content of the following representation theorem for basic powers. Obviously, the earlier monotonicity condition has to be dropped, since it typically fails on our new reading of powers as also offering choices to the other player. On the other hand, this role for the other player also validates a new condition that did not hold before. Consider any pair $F_\eloi,F_\abel \subseteq \psf(O)$:
\smallskip
\begin{description}
\item[Instantiatedness] Given $P \in F_\eloi$ ($P \in F_\abel$): for any $x \in P$, there is some $P'\in F_\abel$ ($P' \in F_\eloi$) with $x \in P'$. 
\end{description}
%
%

\begin{theorem}
\label{representation}
Suppose $F_\eloi,F_\abel \subseteq \psf(O)$. Then the pair $(F_\eloi,F_\abel)$ satisfies the Non-emptiness, Instantiatedness and Consistency conditions if, and only if, there exists a game $\game$ such that $F_\eloi = B_\eloi(\game)$ and  $F_\abel = B_\abel(\game)$.
\end{theorem}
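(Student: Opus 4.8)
The plan is to prove both directions of the equivalence. The easy direction is soundness: if $\game$ is any game with $F_\eloi = B_\eloi(\game)$ and $F_\abel = B_\abel(\game)$, one checks the three conditions. \textbf{Non-emptiness} holds because every player always has at least one strategy, hence at least one basic power. \textbf{Consistency} holds because for strategies $\sigma$ for $\eloi$ and $\tau$ for $\abel$, the match guided by both (i.e.\ determined by the profile extending $\sigma,\tau$ arbitrarily on the other players, but in a two-player game just $(\sigma,\tau)$) has an outcome lying in both $B_\eloi$-witness $P_\sigma$ and $B_\abel$-witness $P_\tau$, so $P_\sigma \cap P_\tau \neq \emptyset$. \textbf{Instantiatedness} holds because if $P = \{o(m)\mid m\in\mathsf{Match}(\sigma)\}$ is a basic power for $\eloi$ and $x\in P$, then $x = o(m)$ for some $\sigma$-guided match $m$; but $m$ is also guided by \emph{some} strategy $\tau$ for $\abel$ (read off $\abel$'s moves along $m$), and then $x \in \{o(m')\mid m'\in\mathsf{Match}(\tau)\} \in B_\abel(\game)$, unless actually we must be a little careful: we only know $x$ is \emph{in} that set, which is exactly what Instantiatedness demands.

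For the harder converse, I would build a game $\game$ explicitly from $(F_\eloi,F_\abel)$, generalizing the construction behind Theorem~\ref{knownrepresentation} but respecting the new constraints. The natural idea: let $\eloi$ move first by choosing a set $P\in F_\eloi$, then let $\abel$ move by choosing a set $Q\in F_\abel$, and then — since Consistency guarantees $P\cap Q\neq\emptyset$ — arrange a subgame whose reachable outcomes are exactly $P\cap Q$, but where both players retain the ability to ``aim at'' any particular element. Concretely, after $P$ and $Q$ are chosen one can have, say, $\eloi$ pick $x\in P\cap Q$ and $\abel$ confirm, with the outcome being $x$ if they agree and otherwise some fixed element of $P\cap Q$ determined by whoever's ``turn'' breaks the tie — the standard trick to make a two-move coordination gadget with outcome-set exactly $P\cap Q$. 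The key computation is then to verify $B_\eloi(\game) = F_\eloi$ and $B_\abel(\game) = F_\abel$. For $B_\eloi(\game)\supseteq F_\eloi$: given $P\in F_\eloi$, the strategy ``play $P$, then in each subgame after any $Q$ aim at a chosen $x_Q\in P\cap Q$'' — but one must check its set of guided-match outcomes is exactly $P$, which requires that for \emph{every} $x\in P$ there is a $Q\in F_\abel$ with $x\in Q$ (so that $\abel$ can steer toward $x$): this is precisely \textbf{Instantiatedness}. Conversely $B_\eloi(\game)\subseteq F_\eloi$ needs that any strategy's outcome-set, after $\eloi$ has committed to some $P$, lies within $P$ and contains all of $P$ reachable via $\abel$'s choices — again leaning on Instantiatedness and Consistency — so it equals $P\in F_\eloi$. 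The symmetric clauses handle $\abel$.

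The main obstacle I anticipate is the same asymmetry that the excerpt already flags: the old construction for Theorem~\ref{knownrepresentation} used Monotonicity to pad powers out to the whole outcome set, but here Monotonicity fails and we need the outcome-set of a strategy to come out \emph{exactly} right, neither too big nor too small. So the gadget after $(P,Q)$ must be engineered so that (i) its full set of achievable outcomes over all strategy profiles is exactly $P\cap Q$ (forcing $\subseteq$ on both sides), and (ii) each player individually can, by choosing their continuation appropriately, realize \emph{any} target element of the relevant set — this is what pins the basic power down to $P$ (resp.\ $Q$) rather than a proper subset, and it is exactly here that Instantiatedness does the work the old proof did not need. A secondary subtlety is bookkeeping when $F_\eloi$ or $F_\abel$ contains sets of different sizes or the empty intersection is threatened; Consistency rules the latter out, and one should double-check the degenerate case $O=\emptyset$ (excluded, since Non-emptiness forces the families nonempty and their members are nonempty by Instantiatedness applied to any element — more precisely, Consistency with Non-emptiness forces $O\neq\emptyset$). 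I would also remark that the resulting game is finite whenever $O$ is finite, matching the finite-tree requirement in the definition of extensive game.
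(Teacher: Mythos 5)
Your soundness direction is essentially right (and matches the paper, which makes it painless by passing to strategic forms, so that a basic power of $\sigma$ is literally $\{o(\sigma,\tau)\mid \tau\}$ and Instantiatedness is immediate). The gap is in the converse. Your construction keeps the players' later choices as separate moves in an extensive tree, and in an extensive game a player's strategies are \emph{all} functions on her decision nodes: you cannot forbid the ``bad'' ones. Concretely, consider $F_\eloi = F_\abel = \{O\}$ with $|O|\ge 2$; this satisfies Non-emptiness, Consistency and Instantiatedness, so the only admissible basic power of either player is $O$ itself. But in your game, after committing to $P=O$, $\eloi$ also has the strategy that aims at the \emph{same} element $z$ in every gadget; its outcome set is $\{z\}$ (or $\{z,z_0\}$ with your tie-break element), a proper subset of $O$ not in $F_\eloi$, so $B_\eloi(\game)\subseteq F_\eloi$ fails. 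Your requirement (ii) --- that each player can steer the gadget to any target element --- is precisely what manufactures these unwanted basic powers, and no perfect-information gadget can escape this: whoever moves last in the gadget can always pin the outcome down, so in the example above some player always acquires a small basic power. There is also a secondary problem with your $\supseteq$ direction: if several elements of $P$ share the same witnessing $Q\in F_\abel$ (e.g.\ $F_\abel$ is a singleton), the strategy ``aim at one $x_Q$ per subgame'' cannot cover all of $P$, so you need a mechanism providing ``enough copies'' of each $Q$.

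The paper's proof resolves both issues by building a \emph{strategic form} game (equivalently, one big simultaneous move under imperfect information), where the strategy sets themselves can be engineered: $\Sigma_\abel = F_\abel\times O\times\{0,1\}$ (the $O\times\{0,1\}$ factor supplies the copies), and $\Sigma_\eloi$ consists only of those choice functions $c$ with $c(Z,u,j)\in Z$ whose image lies in $F_\eloi$. The image constraint kills exactly the bad strategies your construction admits, and the copies let $\eloi$ realize every element of any $Z\in F_\eloi$ while $\abel$'s strategy $(Z,u,j)$ has outcome set exactly $Z$. If you want to salvage your approach, you would have to collapse each player's choice of ``$P$ together with all continuations'' into a single constrained move of this kind, which is in effect the paper's construction.
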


We can give a more compact statement of these conditions in terms of the above Egli-Milner lifting. Then the Instantiatedness and Consistency conditions together become: for all $P \in F_\eloi$, we have $P \widetilde{\in} F_\abel$, and for all $P \in F_\abel$, we have $P \widetilde{\in} F_\eloi$. 

In order to prove the theorem, it will be convenient to work with games in strategic form:
\begin{definition}
A \emph{strategic form two-player game} with outcomes in $O$ is a tuple $(\Sigma_\eloi,\Sigma_\abel,o)$ such that $\Sigma_\eloi$ and $\Sigma_\abel$ are non-empty sets (interpreted as \emph{strategy sets} for each player) and $o : \Sigma_\eloi \times \Sigma_\abel \to O$ is the outcome map. An element of $\Sigma_\eloi \times \Sigma_\abel$ is called a \emph{strategy profile}.
\end{definition} 
We can define powers and instantiated powers for strategic form games in the expected manner. In particular an \emph{instantiated power} for Player $\eloi$ is a subset $P$ of $O$ such that, for some strategy $\sigma \in \Sigma_\eloi$:
$$P = \{u \in O \mid o(\sigma,\sigma') = u \text{ for some } \sigma' \in \Sigma_\abel\}$$
and dually for $\abel$. Since every extensive game has a strategic normal form, and conversely every strategic form game is the strategic normal form of some extensive game of imperfect information, we can work with strategic and extensive games interchangeably (see for example \cite{osborne1994course}). 

It is straightforward to check that the conditions Non-emptiness, Consistency and Instantiatedness hold for the instantiated powers of any game. For the converse, let $F_\eloi,F_\abel \subseteq \psf(O)$ be given, and suppose all three conditions hold for the pair $(F_\eloi,F_\abel)$. We shall construct a game $\game$ such that the instantiated powers of each player $\player$ in $\game$ coincides with the set $F_\player$. We construct $\game$ as a strategic form game, as follows:
\begin{itemize}
\item The set $\Sigma_\abel$ of strategies for $\abel$ is just the set $F_\abel \times O \times \{0,1\}$.
\item The set $\Sigma_\eloi$ of strategies for $\eloi$ is defined as the collection of all maps $c : \Sigma_\abel \to O$ such that:
\begin{itemize}
\item $c(Z,u,j) \in Z$ for all $Z \in F_\abel$, $u \in O$ and $j \in \{0,1\}$, and
\item the image $c[\Sigma_\abel] $ of the set $ \Sigma_\abel$ under the map $c$ is a member of $F_\eloi$.
\end{itemize}
\item The outcome map $o$ sends a strategy profile $(c,(Z,u,j)) \in \Sigma_\eloi \times \Sigma_\abel$ to $c(Z,u,j)$.
\end{itemize}
The set $\Sigma_\abel$ is non-empty by the Non-emptiness condition, and it will follow from Claim \ref{fonesubset} below that $\Sigma_\eloi$ is also non-empty.
The appearance of the set $O \times \{0,1\}$ in this construction is merely a way to create ``enough copies'' of each set in $F_\abel$ to make sure that certain suitable strategies for $\eloi$ can be defined. In particular, it allows us to establish the following claim:
\begin{claim}
\label{fonesubset}
For every set $Z \in F_\eloi$, there exists a strategy $c$ for  $\eloi$ in $\game$ such that $c[F_\abel \times O \times \{0,1\}] = Z$.
\end{claim}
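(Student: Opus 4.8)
The plan is to write down an explicit strategy $c$ for $\eloi$ and check that its image is exactly $Z$. Fix $Z \in F_\eloi$. Recall that a map $c : \Sigma_\abel \to O$ counts as a strategy for $\eloi$ precisely when (a) $c(Y,u,j) \in Y$ for every $(Y,u,j) \in \Sigma_\abel = F_\abel \times O \times \{0,1\}$, and (b) $c[\Sigma_\abel] \in F_\eloi$. I will build $c$ so that $c[\Sigma_\abel] = Z$, which makes (b) automatic, and the whole point will be that the two conditions to be met — staying inside $Z$ while hitting all of $Z$ — are exactly what Consistency and Instantiatedness deliver.

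Concretely: by Consistency applied to $Z \in F_\eloi$ and any $Y \in F_\abel$, the intersection $Z \cap Y$ is non-empty, so I can fix a witness $z_Y \in Z \cap Y$ for each $Y \in F_\abel$. Then I would define
\[
c(Y,u,j) \;=\; \begin{cases} u & \text{if } u \in Z \cap Y, \\ z_Y & \text{otherwise.}\end{cases}
\]
Condition (a) is immediate: in the first case $u \in Y$, in the second $z_Y \in Y$. Moreover every value taken by $c$ lies in $Z$ (both $u$ in the first case and $z_Y$ in the second), so $c[\Sigma_\abel] \subseteq Z$. For the reverse inclusion, take an arbitrary $x \in Z$; since $Z \in F_\eloi$, Instantiatedness supplies some $P \in F_\abel$ with $x \in P$, hence $x \in Z \cap P$ and therefore $c(P,x,0) = x$. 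Thus $Z \subseteq c[\Sigma_\abel]$, giving $c[\Sigma_\abel] = Z \in F_\eloi$. So $c$ is a genuine strategy for $\eloi$ with $c[\Sigma_\abel] = Z$, which is the claim; and since $F_\eloi$ is non-empty by Non-emptiness, applying this to any $Z \in F_\eloi$ also yields the promised non-emptiness of $\Sigma_\eloi$.

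I do not anticipate a real obstacle here: the only delicate point is that the ``otherwise'' branch of $c$ must never produce a value outside $Z$, and this is exactly what the choice $z_Y \in Z \cap Y$ — legitimate by Consistency — secures. It is worth noting that the third coordinate $\{0,1\}$ plays no role in this particular argument: a single family of copies indexed by $O$ already suffices to realise each $x \in Z$ via the pair $(P,x)$. Its presence in $\Sigma_\abel$ is carried along only for later steps in the proof of Theorem \ref{representation} (verifying that the instantiated powers of $\abel$ in $\game$ are exactly $F_\abel$), where having two copies of each $Y \in F_\abel$ gives the extra room needed to extend partial choices for $\eloi$ to full strategies.
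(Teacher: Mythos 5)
Your proof is correct and follows essentially the same route as the paper's: Consistency supplies the ``default'' values $z_Y \in Z \cap Y$ keeping the image inside $Z$ while respecting the constraint $c(Y,u,j) \in Y$, and Instantiatedness guarantees every $x \in Z$ is realized via a triple $(P,x,0)$ with $x \in P \in F_\abel$. The only difference is cosmetic — the paper routes the second step through an explicit choice function $g : Z \to F_\abel$ and sets $c(g(u),u,0)=u$ only on those designated triples, whereas you define $c$ uniformly on all compatible triples — and your closing remark about the role of the $O \times \{0,1\}$ copies matches the paper's (they are needed only later, in Claim \ref{oinftwo}).
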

\begin{proof}[Proof of Claim \ref{fonesubset}]
Suppose $Z \in F_\eloi$. For every $u \in Z$, there exists some $Z' \in F_\abel$ with $u \in Z'$, by the Instantiatedness property. So we can define a choice function $g : Z \to F_\abel$ such that for each $u \in Z$ we have $u \in g(u)$. We can modify this $g$ to obtain a map $$g' : Z \to F_\abel \times O \times \{0,1\}$$ by mapping $u \in Z$ to the triple $(g(u),u,0)$. We can now define the strategy $c$ as follows: given a triple $(g(u),u,0)$ in the image of $Z$ under the map $g'$, we set  $c(g(u),u,0) = u$. For every triple $(Z',u',k)$ not in the image of $g'$, we set $c(Z',u',k)$ to be some arbitrary element of $Z \cap Z'$, which exists by the Consistency condition. Clearly we get that the image of the set $F_\abel \times O \times \{0,1\}$ under the map $c$ is equal to $Z$. Furthermore, since $c(Z',u',k) \in Z'$ for each triple $(Z',u',k)$, we get that $c$ is a legitimate strategy for $\eloi$ in $\game$.
\end{proof}
A second claim that we will need is the following:
\begin{claim}
\label{oinftwo}
Let $Z \in F_\abel$,  $u \in Z$, $u' \in O$ and let $j \in \{0,1\}$. Then there exists a strategy $c$ for  $\eloi$ in $\game$ such that $c(Z,u',j) = u$.
\end{claim}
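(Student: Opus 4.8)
The plan is to imitate the construction used in the proof of Claim~\ref{fonesubset}, but to set up the indexing so that the prescribed triple $(Z,u',j)$ gets assigned the value $u$.

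First I would invoke Instantiatedness: since $u \in Z$ and $Z \in F_\abel$, there is some $P \in F_\eloi$ with $u \in P$. This $P$ will serve as the image $c[\Sigma_\abel]$ of the strategy I construct, which is what makes $c$ a legitimate $\eloi$-strategy.

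Next I would build an injective ``labelling'' map $g' : P \to \Sigma_\abel = F_\abel \times O \times \{0,1\}$ that sends $u$ to $(Z,u',j)$ and sends every other $v \in P$ to a triple of the form $(Z_v, v, k_v)$, where $Z_v \in F_\abel$ is chosen (by Instantiatedness) with $v \in Z_v$, and $k_v \in \{0,1\}$ is picked to avoid a clash with $(Z,u',j)$ in the one potentially problematic case $v = u'$. The two flag values in $O \times \{0,1\}$ are exactly what guarantees that such a $g'$ exists — this mirrors the remark already made in the text about having ``enough copies'' of the sets in $F_\abel$. With $g'$ in hand, I define $c$ to be the inverse of $g'$ on its image (so $c(g'(v)) = v$ for each $v \in P$), and on any triple $(Z'',w,k)$ outside the image of $g'$ I let $c(Z'',w,k)$ be an arbitrary element of $Z'' \cap P$, which is non-empty by Consistency since $Z'' \in F_\abel$ and $P \in F_\eloi$.

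Finally I would verify the two defining conditions for $\eloi$-strategies in $\game$: that $c(Z'',w,k) \in Z''$ for every triple (immediate, since for $g'(v)$ the first coordinate is a set containing $v$, and on the remaining triples the value is chosen inside $Z''$), and that $c[\Sigma_\abel] = P \in F_\eloi$ (the values $c(g'(v)) = v$ already exhaust $P$, and every other value lies in $P$). Then $c(Z,u',j) = c(g'(u)) = u$ by construction. The only mildly delicate point — and the step I would be most careful about — is the injectivity of $g'$, i.e. making sure the prescribed triple can be slotted in without colliding with the triples chosen to witness the remaining elements of $P$; once the flag is used correctly this is immediate, and everything else is routine.
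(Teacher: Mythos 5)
Your proof is correct, and it rests on exactly the same ingredients as the paper's: Instantiatedness to obtain a set $P \in F_\eloi$ containing $u$ that will serve as the image of the strategy, the spare copies in $O \times \{0,1\}$ to slot the prescribed triple $(Z,u',j)$ in without a clash, and Consistency to fill in all remaining triples inside $P$. The only difference is procedural: the paper first applies Claim \ref{fonesubset} as stated, observes (by inspecting its proof) that the image is already achieved on the flag-$0$ triples, and then locally patches the resulting strategy at $(Z,u',j)$ (and at $(Z,u',1)$ when $j=0$, to preserve the image), whereas you re-run the construction of Claim \ref{fonesubset} from scratch with the assignment $c(Z,u',j)=u$ built into the labelling map, using the second flag value only to keep the labelling injective in the one possible collision case $v=u'$. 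Your route is self-contained and avoids the slightly awkward appeal to the internals of Claim \ref{fonesubset}'s proof, at the cost of repeating its bookkeeping; the paper's route is shorter given that Claim \ref{fonesubset} is already established. Your attention to the injectivity of the labelling (which is what makes $c$ well defined) and to the check $u \in Z$ for the membership constraint at the prescribed triple covers the only delicate points, so there is no gap.
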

\begin{proof}[Proof of Claim \ref{oinftwo}]
By the Instantiatedness property there exists some $Z'\in F_\eloi$ such that $u \in Z'$. But by Claim \ref{fonesubset} there exists some $c$ which is a legitimate move for $\eloi$ in $\game$, such that $c[F_\abel \times O \times \{0,1\}] = Z'$. In fact, by inspection of the proof of Claim \ref{fonesubset}, we see that we may pick $c$ so that $c[F_\abel \times O \times \{0\}] = Z'$ as well. Now define the map $c'$ as follows: if $j = 0$ then define $c'$ to be like $c$ except that $c'(Z,u',0) = u$ and $c'(Z,u',1) = c(Z,u',0)$. If $j = 1$, then define $c'$ to be like $c$ except that $c'(Z,u',1) = u$. In either case, we still have $c'[F_\abel \times O \times \{0,1\}] = Z'$, and so we see that $c'$ is a legitimate strategy for $\eloi$. Since $c' (Z,u',j) = u$, we are done. 
\end{proof}
We now show that every instantiated power for either player $\player$ is an element of $F_\player$, and vice versa. We have four different inclusions to prove, two for each player. 

Suppose that $Z \in F_\eloi$. By Claim \ref{fonesubset} there exists a strategy $c$ for  $\eloi$ such that $c[F_\abel \times O \times \{0,1\}] = Z$. But since the possible strategies for $\abel$ are exactly the members of the set $F_\abel \times O \times \{0,1\}$, it follows that $Z$ is an instantiated power for $\eloi$.

Conversely, suppose that $Z$ is an instantiated power for $\eloi$ in $\game$. Then there exists a strategy $c$ for $\eloi$ such that the possible outcomes consistent with $c$ are precisely the members of $Z$. It follows that $c[F_2 \times O \times \{0,1\}] = Z$. But since the strategies for $\eloi$ are subject to the constraint that $c[F_2 \times O \times \{0,1\}] \in F_\eloi$, it follows that we must have $Z \in F_\eloi$.

To prove the inclusions for $\abel$ we need the following claim: 
\begin{claim}
\label{samez}
Let $(Z,u,k) \in F_\abel \times O \times \{0,1\}$, considered as a strategy for $\abel$. Then the possible outcomes consistent with this strategy are precisely the members of $Z$, i.e. $Z = \{c(Z,u,k) \mid c \in \Sigma_\eloi\}$.
\end{claim}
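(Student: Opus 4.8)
The plan is to prove the two inclusions of the claimed set equality separately; both are immediate from the construction of $\game$ and the claims already in hand.

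For the inclusion $\{c(Z,u,k) \mid c \in \Sigma_\eloi\} \subseteq Z$, I would simply unwind the definition of $\Sigma_\eloi$. By construction, every $c \in \Sigma_\eloi$ satisfies $c(Z',u',j) \in Z'$ for all $(Z',u',j) \in F_\abel \times O \times \{0,1\}$. Instantiating $(Z',u',j)$ as $(Z,u,k)$ yields $c(Z,u,k) \in Z$. Hence every outcome consistent with the strategy $(Z,u,k)$ for $\abel$ lies in $Z$.

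For the reverse inclusion $Z \subseteq \{c(Z,u,k) \mid c \in \Sigma_\eloi\}$, I would invoke Claim \ref{oinftwo}. Given an arbitrary $v \in Z$, apply Claim \ref{oinftwo} with its parameters instantiated as: the set $Z$ (which is in $F_\abel$ by hypothesis), the element $v \in Z$ in the role of the ``$u$'' of that claim, the element $u \in O$ in the role of its ``$u'$'', and $k$ in the role of its ``$j$''. This produces a legitimate strategy $c \in \Sigma_\eloi$ for $\eloi$ with $c(Z,u,k) = v$, so $v$ is an outcome consistent with the strategy $(Z,u,k)$. Combining the two inclusions gives $Z = \{c(Z,u,k) \mid c \in \Sigma_\eloi\}$.

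There is no real obstacle: the substantive work was already done in Claim \ref{oinftwo} (which in turn rested on Claim \ref{fonesubset} together with Instantiatedness and Consistency), and the present claim merely repackages that fact from $\abel$'s point of view. The only point requiring care is bookkeeping — correctly matching the parameter names of Claim \ref{oinftwo} to the data $(Z,u,k)$ and the chosen $v \in Z$. Once this claim is established, it delivers both remaining inclusions for $\abel$: if $Z \in F_\abel$ then $(Z,u,k)$ witnesses that $Z$ is an instantiated power for $\abel$; and conversely any instantiated power of $\abel$ arises from some strategy $(Z',u,k)$, whose consistent-outcome set is $Z'$ by the claim, forcing it to equal $Z' \in F_\abel$.
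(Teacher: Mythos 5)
Your proof is correct and matches the paper's own argument: the inclusion $\{c(Z,u,k) \mid c \in \Sigma_\eloi\} \subseteq Z$ follows from the defining constraint on strategies in $\Sigma_\eloi$, and the reverse inclusion is exactly the application of Claim \ref{oinftwo} with the parameters instantiated as you describe. No differences worth noting.
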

\begin{proof}[Proof of Claim \ref{samez}]
Let $Z'$ denote the set of all possible outcomes consistent with the strategy $(Z,u,k)$. It is clear that $Z' \subseteq Z$, since every possible outcome $u'$ consistent with the strategy  $(Z,u,k)$ is of the form $c(Z,u,k)$ for some strategy $c \in F_\eloi$, which means that $u'= c(Z,u,k) \in Z$. Conversely, if $u' \in Z$ then by Claim \ref{oinftwo} there exists a strategy $c$ for $\eloi$ such that $c(Z,u,k) = u'$. So $u'$ is an outcome that is consistent with the strategy $(Z,u,k)$, hence $u'\in Z'$. So we get $Z = Z'$, and the proof is finished.
\end{proof}
We can now easily prove both of the inclusions for $\abel$: if $Z \in F_\abel$, then $(Z,u,0)$ is a legitimate strategy for  $\abel$ for any arbitrarily chosen $u \in O$, and it now follows directly from Claim \ref{samez} that $Z$ is an instantiated power for $\abel$. Conversely, if $Z$ is an instantiated power for  $\abel$ in $\game$ then there is some strategy $(Z',u,k)$ for $\abel$ witnessing this. By Claim \ref{samez} we get $Z = Z'$, and since   $(Z',u,k)$ is a strategy for $\abel$ we have $Z' \in F_\abel$. So $Z \in F_\abel$, and we are done.

From this proof, we can also read off the following result:

\begin{theorem}
Our properties of basic powers also capture the powers computed from rows and columns of  matrix games.
\end{theorem}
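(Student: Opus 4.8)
The plan is to observe that the proof of Theorem~\ref{representation} already does all the work, since the game constructed there was presented in strategic form. First I would make precise what ``powers computed from rows and columns'' means: given a matrix game $(\Sigma_\eloi,\Sigma_\abel,o)$ in the sense of the strategic-form games above, the power read off from a row $\sigma \in \Sigma_\eloi$ is the set $\{o(\sigma,\tau) \mid \tau \in \Sigma_\abel\}$ of outcomes occurring in that row, and dually the power read off from a column $\tau \in \Sigma_\abel$ is $\{o(\sigma,\tau) \mid \sigma \in \Sigma_\eloi\}$; these are precisely the \emph{instantiated powers} of $\eloi$ and $\abel$ already defined. So the statement to prove is: $(F_\eloi,F_\abel)$ is the pair of row-power and column-power families of some matrix game if, and only if, it satisfies the Non-emptiness, Instantiatedness and Consistency conditions.

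For the left-to-right direction I would verify the three conditions directly on an arbitrary matrix game, exactly as sketched in the paragraph preceding the construction. Non-emptiness holds because $\Sigma_\eloi$ and $\Sigma_\abel$ are non-empty, so every row and every column yields a non-empty set of outcomes. Instantiatedness holds because if $u$ lies in the row of $\sigma$, witnessed by $o(\sigma,\tau) = u$, then $u$ also lies in the column of $\tau$, and symmetrically. Consistency holds because the row of $\sigma$ and the column of $\tau$ both contain the outcome $o(\sigma,\tau)$.

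For the right-to-left direction there is nothing new to do: the game $\game$ built in the proof of Theorem~\ref{representation} from a triple $(F_\eloi,F_\abel)$ satisfying the three conditions is literally a strategic-form game $(\Sigma_\eloi,\Sigma_\abel,o)$, and the argument already given there shows that its instantiated powers --- that is, its row powers and its column powers --- coincide with $F_\eloi$ and $F_\abel$ respectively. Hence the very same construction witnesses the present theorem.

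There is essentially no obstacle: the content of the statement is the observation that the representation theorem was proved in a form that already lives in the matrix world, together with the dictionary identifying ``row/column powers'' with ``instantiated powers''. The only point requiring a word of care is the translation between extensive and strategic games invoked in the text: since every strategic-form game is the strategic normal form of some extensive game of imperfect information, and conversely, one may freely regard the constructed $\game$ as an extensive game when comparing with Theorem~\ref{representation}, and as a matrix game when reading off its rows and columns here.
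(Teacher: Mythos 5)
Your proposal is correct and matches the paper's own argument: the paper proves the representation theorem (Theorem~\ref{representation}) directly on strategic-form games, identifying instantiated powers with row and column powers, and then simply reads the present statement off that proof, exactly as you do. No gaps.
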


\smallskip

At present we do not have a representation theorem for basic powers in the special case of perfect information games. It is easy to find additional conditions on powers that hold in this setting, but we  have not yet found a complete set.

\section{The logic of basic powers: instantial game logic}
\label{inlsection}

What is the game logic that goes with strong power equivalence? Our earlier notion of strategy bisimulation points the way. It resembles the \emph{instantial neighborhood bisimulations}  introduced in \cite{vb16} as the invariance underlying \emph{instantial neighborhood logic}.
Accordingly, we now introduce a logic for games at this level of structure. The syntax  of \emph{instantial game logic} IGL is given by the following grammar:   
\smallskip
$$\varphi := p  \in \mathsf{Prop} \mid   \varphi \wedge \varphi \mid \neg \varphi \mid \gal(\Psi ;\varphi) \mid \gbo(\Psi;\varphi)$$
\smallskip
where $\Psi$ ranges over finite sets of formulas of IGL. We sometimes write $\gpl(\psi_1,...,\psi_n;\varphi)$ rather than $\gpl(\{\psi_1,...,\psi_n\};\varphi)$, $\gpl(\psi;\varphi)$ for $\gpl(\{\psi\};\varphi)$, and $\gpl \varphi$ for $\gpl(\emptyset;\varphi)$.  

\medskip

The semantics of IGL, as for GL, uses neighborhood models. However, the constraints are different, since we are now dealing with \emph{basic} powers. These constraints come from our representation result for basic powers Theorem \ref{representation}.

\begin{definition}
 An  \emph{instantial game frame} $(W,R_\eloi,R_\abel)$ is a triple with $W$ a set, and $R_\player \subseteq W \times \psf W$ for each player $\player \in \{\eloi,\abel\}$, where for all  $u \in W$, the pair $(R_\eloi[u],R_\abel[u])$ satisfies Non-emptiness, Instantiatedness and Consistency. \emph{Instantial game models} then add a valuation for propositional variables.
\end{definition}
The key clause in the truth definition in instantiated game models $\mdl = (W,R_\eloi,R_\abel,V)$ runs as follows:

\medskip
 $u \in \tset{\Gpl (\psi_1,...,\psi_k;\varphi)}$ iff there is some $Z \subseteq W$ such that
\smallskip

$(u,Z) \in R_\player$ and $Z\subseteq \tset{\varphi}$, $Z \cap \tset{\psi_i} \neq \emptyset$ for $i \in \{1,...,k\}$.
\medskip

If we interpret formulas as `outcomes', then we see why IGL is a natural language for basic powers: the formula $\gpl(\Psi;\bigvee \Psi)$ says that $\Psi$ is a basic power for the player $\player$, while the weaker formula $\gpl \bigvee \Psi$ says that $\Psi$ is simply a power. 

\medskip

Instantial models come with a notion of bisimulation which stands to strong power equivalence as standard neighborhood bisimulations stands to power equivalence:
\begin{definition}
Let $\mdl = (W,R,V)$ and $\mdl' = (W',R',V')$ be any neighborhood models. 
The relation $B \subseteq W \times W'$ is said to be an \emph{instantial neighborhood bisimulation} if, for all $uBu'$  and $\player \in \{\eloi,\abel\}$, we have:
\begin{description}
\item[Forth] For all $Z$ such that $u R_\player Z$, there is some $Z'$ such that $u' (R_\player') Z'$ and the following conditions hold:
\item[Forth-Back] For all $v' \in Z'$ there is some $v \in Z$ such that $v B v'$.
\item[Forth-Forth] For all $v \in Z$ there is some $v' \in Z' $ such that $v B v'$.
\item[Back] For all $Z'$ such that $u' R_\player Z'$ there is some $Z$ such that $u R_\player Z$ and the following condition holds:
\item[Back-Forth] For all $v \in Z$ there is some $v' \in Z' $ such that $v B v'$.
\item[Back-Back] For all $v' \in Z'$ there is some $v \in Z$ such that $v B v'$.
\end{description}
Pointed instantial game models $\mdl,w$ and $\mathfrak{N},v$ are \emph{instantial neighborhood bisimilar},  $\mdl,w \bisimilar \mathfrak{N},v$, if some instantial neighborhood bisimulation $B$ between $\mdl$ and $\mathfrak{N}$ has $wBv$.
\end{definition}
Formulas of IGL are invariant for instantial bisimilarity:
\begin{proposition}
If $\mdl,w \bisimilar \mathfrak{N},v$ then $\mdl,w \Vdash \varphi$ iff $\mathfrak{N},v \Vdash \varphi$, for each formula $\varphi$ of IGL.
\end{proposition}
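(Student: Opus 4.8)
The plan is to prove, by induction on the structure of $\varphi$, the apparently stronger statement that for the fixed instantial neighborhood bisimulation $B$ witnessing $\mdl,w \bisimilar \mathfrak{N},v$ and \emph{every} pair $uBu'$ we have $\mdl,u \Vdash \varphi$ iff $\mathfrak{N},u' \Vdash \varphi$; the proposition is the instance $u=w$, $u'=v$. (Note that the instantial game frame conditions play no role in this argument: invariance holds for arbitrary neighborhood models.) Write $\tset{\cdot}^{\mdl}$ and $\tset{\cdot}^{\mathfrak{N}}$ for the two interpretation functions. The base case $\varphi=p$ holds because bisimilar states verify the same propositional variables, and the Boolean cases $\varphi=\psi\wedge\chi$ and $\varphi=\neg\psi$ follow immediately from the induction hypothesis — which, being a biconditional, makes negation unproblematic.

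The only case with content is $\varphi=\gpl(\psi_1,\dots,\psi_k;\chi)$. For the direction $\mdl,u\Vdash\varphi \Rightarrow \mathfrak{N},u'\Vdash\varphi$, unfold the truth clause: there is $Z\subseteq W$ with $(u,Z)\in R_\player$, $Z\subseteq\tset{\chi}^{\mdl}$, and $Z\cap\tset{\psi_i}^{\mdl}\neq\emptyset$ for each $i\le k$. Apply the \textbf{Forth} clause of $B$ at $uBu'$ for player $\player$ to get $Z'$ with $(u',Z')\in R'_\player$ satisfying \textbf{Forth-Back} and \textbf{Forth-Forth}. This $Z'$ witnesses $\mathfrak{N},u'\Vdash\varphi$: for the universal part $Z'\subseteq\tset{\chi}^{\mathfrak{N}}$, given $v'\in Z'$ use \textbf{Forth-Back} to find $v\in Z$ with $vBv'$, and since $v\in Z\subseteq\tset{\chi}^{\mdl}$ the induction hypothesis for $\chi$ gives $v'\in\tset{\chi}^{\mathfrak{N}}$; for each existential part, pick $v\in Z\cap\tset{\psi_i}^{\mdl}$, use \textbf{Forth-Forth} to find $v'\in Z'$ with $vBv'$, and the induction hypothesis for $\psi_i$ gives $v'\in\tset{\psi_i}^{\mathfrak{N}}$. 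The converse direction is entirely symmetric, using \textbf{Back}, \textbf{Back-Forth} and \textbf{Back-Back} in place of the three \textbf{Forth} clauses.

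I do not anticipate a genuine obstacle: this is the standard bisimulation-invariance induction, and the modal case above is essentially the whole argument. The one point deserving care — and the one that explains why instantial neighborhood bisimulation carries two sub-clauses on each side rather than one, unlike the plain monotone-neighborhood bisimulation used for $\mathbf{GL}$ — is that the truth clause for $\gpl(\Psi;\chi)$ has both a universal component ($Z\subseteq\tset{\chi}$) and existential components ($Z$ meets each $\tset{\psi_i}$). Transferring it along $B$ therefore requires a two-sided, Egli–Milner-style matching between $Z$ and the chosen $Z'$: the $Z'$-to-$Z$ matching (\textbf{Forth-Back}) to carry the universal component and the $Z$-to-$Z'$ matching (\textbf{Forth-Forth}) to carry the existential components. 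This is exactly the lifting already visible in the definition of strategy bisimulation, so the proof also makes transparent why IGL is the matching logic for strong power equivalence.
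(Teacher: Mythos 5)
Your proof is correct and is exactly the standard bisimulation-invariance induction that the paper intends; the paper in fact states this proposition without proof, and your handling of the modal case (Forth with Forth-Back for the universal component and Forth-Forth for the existential components, symmetrically for Back) is the whole content of the argument. One small caveat: your base case appeals to atomic harmony (bisimilar states satisfying the same propositional variables), a clause the paper's written definition of instantial neighborhood bisimulation omits but clearly intends, so it is worth stating explicitly that you assume it.
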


More can be said about the model theory of instantial neighborhood simulation, but the present facts suffice here.
 
\section{Axiomatizing IGL}

In this section we axiomatize the valid formulas of IGL, thus pinning down the modal logic of basic powers. Our system is a gentle modification of instantial neighborhood logic.  
\subsubsection*{IGL axioms.}
\begin{description}
\item[Mon] $\gpl(\psi_1,...,\psi_n;\varphi) \rightarrow \gpl(\psi_1 \vee \alpha_1,...,\psi_n\vee \alpha_n; \varphi \vee \beta)$
\item[Weak] $\gpl(\Psi;\varphi) \rightarrow \gpl(\Psi';\varphi)$ for $\Psi' \subseteq \Psi$
\item[Un] $\gpl(\psi_1,...,\psi_n; \varphi) \rightarrow \gpl(\psi_1 \wedge \varphi,...,\psi_n\wedge \varphi; \varphi)$
\item[Lem] $\gpl(\Psi;\varphi) \rightarrow \gpl(\Psi \cup \{\gamma\};\varphi) \vee \gpl(\Psi;\varphi \wedge \neg \gamma)$
\item[Bot] $\neg \gpl(\bot;\varphi)$
\end{description}
\subsubsection*{Axioms for frame constraints}
\begin{description}
\item[Non-Em] $\Gpl\top$
\item[Inst] $\Gpl(\psi;\top) \leftrightarrow [\adversary](\psi;\top)$
\item[Cons] $\Gpl \varphi \rightarrow \neg [\adversary] \neg \varphi$
\end{description}
\subsubsection*{Proof rules.}
\begin{description}
\item[MP] $$\frac{\varphi \rightarrow \psi \quad \quad \varphi}{\psi}$$
\item[RE] $$\frac{\varphi \leftrightarrow \psi \quad\quad \theta}{\theta[\varphi/\psi]}$$
\end{description}
\medskip
We denote this system of axioms by $\mathbf{IGL}$ and write $\mathbf{IGL} \vdash \varphi$ to say that the formula $\varphi$ is provable in this axiom system. 
\begin{theorem}
\label{completeness}
The system $\mathbf{IGL} $ is sound and complete for validity over instantial game models.
\end{theorem}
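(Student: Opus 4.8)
The plan is as follows. \textbf{Soundness} is routine: \textbf{Mon}, \textbf{Weak}, \textbf{Un}, \textbf{Lem} and \textbf{Bot} are the axioms of instantial neighbourhood logic and are valid on all instantial neighbourhood frames, while \textbf{Non-Em}, \textbf{Inst} and \textbf{Cons} express exactly the Non-emptiness, Instantiatedness and Consistency conditions defining instantial game frames (cf.\ Theorem~\ref{representation}); and \textbf{MP}, \textbf{RE} preserve validity. So the content is \textbf{completeness}: the plan is to show that every $\mathbf{IGL}$-consistent formula $\varphi_0$ is satisfied in a finite instantial game model, adapting the finitary canonical-model technique of \cite{vb16} so as to force the three frame constraints.

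For the \textbf{model skeleton} one fixes a finite set $\Sigma$ of formulas containing $\varphi_0$ (and $\top,\bot$), closed under subformulas, single negations, and the Boolean and modal variants of its members generated by \textbf{Weak}, \textbf{Un}, \textbf{Lem}, \textbf{Mon} (a Fischer--Ladner-style closure). In addition, to handle the frame constraints I would require $\Sigma$ to be rich enough to \emph{name states}: writing $\delta_s$ for the conjunction of a maximal $\mathbf{IGL}$-consistent subset $s$ of $\Sigma$, I want $\Sigma$ to contain $[\eloi]\beta$, $[\abel]\beta$, $[\eloi](\beta;\top)$ and $[\abel](\beta;\top)$ for every Boolean combination $\beta$ of such $\delta_s$'s; since there are finitely many maximal consistent subsets, hence finitely many such $\beta$ up to equivalence, this can be arranged by iterating the closure to a finite fixed point. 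Let $W$ be the set of maximal $\mathbf{IGL}$-consistent subsets of $\Sigma$, put $V(p)=\{s\in W:p\in s\}$, write $\widehat\chi=\{s\in W:\chi\in s\}$, and pick $s_0\in W$ with $\varphi_0\in s_0$ by a Lindenbaum argument.

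For the \textbf{neighbourhood relations}, for each player $\player$, state $s$ and modal formula $[\player](\psi_1,\dots,\psi_n;\varphi)\in s$ one extracts, exactly as in the completeness proof of instantial neighbourhood logic (using \textbf{Weak}, \textbf{Un}, \textbf{Lem}, \textbf{Mon}, \textbf{Bot}), a finite $Z\subseteq W$ that is contained in $\widehat\varphi$, meets every $\widehat{\psi_i}$, and is \emph{tight}, i.e.\ witnesses no modal formula of $\Sigma$ outside $s$; all such $Z$ go into $R_\player[s]$, which is non-empty because $[\player]\top\in s$ by \textbf{Non-Em}. One then proves the Truth Lemma $\model,s\Vdash\chi\iff\chi\in s$ for $\chi\in\Sigma$ by induction, the ``$\Rightarrow$'' direction of the modal case using the chosen witnesses and the ``$\Leftarrow$'' direction using tightness, and checks the frame conditions on $(R_\eloi[s],R_\abel[s])$. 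Non-emptiness is immediate. For Consistency: if $Z\in R_\eloi[s]$, $Z'\in R_\abel[s]$ and $Z\cap Z'=\emptyset$, take $\alpha:=\bigvee_{x\in Z}\delta_x$, true in $\model$ exactly on $Z$; then $Z$ witnesses $[\eloi]\alpha$ and $Z'$ witnesses $[\abel]\neg\alpha$, so by tightness both lie in $s$, contradicting \textbf{Cons}. For Instantiatedness: given $Z\in R_\eloi[s]$ and $x\in Z$, the set $Z$ witnesses $[\eloi](\delta_x;\top)$, so $[\eloi](\delta_x;\top)\in s$ by tightness, hence $[\abel](\delta_x;\top)\in s$ by \textbf{Inst}, so some $Z'\in R_\abel[s]$ witnesses $[\abel](\delta_x;\top)$ and therefore contains $x$ (symmetrically for the other direction). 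Then $\model$ is an instantial game model with $\model,s_0\Vdash\varphi_0$, giving completeness; since $W$ is finite this also yields the finite model property and decidability of $\mathbf{IGL}$.

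I expect the \textbf{main obstacle} to be exactly the interaction between the Truth Lemma and the two ``cross-player'' constraints Consistency and Instantiatedness, which have no counterpart in the base logic of \cite{vb16}: there $R_\eloi$ and $R_\abel$ are built independently, and nothing makes the resulting neighbourhoods consistent or instantiated. Forcing these properties drives the enrichment of $\Sigma$ by formulas naming sets of states together with their $[\eloi]$- and $[\abel]$-images, and one must check that (i) the enrichment still closes off to a finite set, (ii) the witness extraction and the tightness property go through in the enriched language, and (iii) any extra neighbourhoods one is compelled to add for Instantiatedness do not destroy tightness, and conversely --- so that the two players' witness families end up produced in a coordinated way. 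Getting this coordination right, rather than any one axiom calculation, is where the non-trivial adaptation over \cite{vb16} sits.
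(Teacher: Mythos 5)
Your soundness paragraph and the overall aim (a finitary syntactic model in which the axioms \textbf{Non-Em}, \textbf{Inst}, \textbf{Cons} are used to force the three frame conditions) match the paper, and your sketched verifications of Consistency and Instantiatedness are in the same spirit as the paper's Lemma \ref{isamodel}. But there is a genuine gap exactly at the point you flag as the main obstacle, and it is not resolved by your construction: the closure set $\Sigma$ you need is defined circularly. The formulas $\Gal(\beta;\top)$, $\Gbo(\beta;\top)$, $\Gal\beta$, $\Gbo\beta$ that you want inside $\Sigma$ are built from the state-descriptions $\delta_s$, but the states $s$ are the maximal consistent subsets of $\Sigma$ itself. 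Iterating does not reach a finite fixed point: after adding $\Gpl(\delta_s;\top)$ for the current states, the new maximal consistent subsets decide these new modal formulas, so their descriptions $\delta_{s'}$ are strictly deeper formulas, the required $\Gpl(\delta_{s'};\top)$ are again outside the closure, and the modal depth grows at every round. So the key prerequisite for your tightness-based arguments (that the naming modal formulas actually belong to $\Sigma$, so that tightness applies to them) is never secured, and without it both your Consistency and Instantiatedness checks collapse.

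The paper avoids this circularity by stratifying the model by modal depth rather than by taking one filtration-style closure: states are pairs $(\widehat{\varphi},k)$ where $\varphi$ is a $(Q,k)$-description (a consistent depth-$\leq k$ formula deciding all $Q$-formulas of depth $\leq k$), and $R_\player$ only relates level $k+1$ to level $k$, via $\varphi \vdash \gpl(\Theta;\bigvee\Theta)$ with $\Theta$ a set of $(Q,k)$-descriptions. The $(Q,k)$-descriptions play the role of your $\delta_s$, but now the ``naming'' modal formulas such as $\Gal(\theta;\top)$ and $\Gbo(\theta;\top)$ have depth $k+1$ and are therefore automatically decided by the $(Q,k{+}1)$-description at the source state --- no closure enrichment is needed. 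The second ingredient you are missing is the normal form lemma (Lemma \ref{normalformlemma}, from the earlier instantial neighborhood logic paper): it converts $\varphi \vdash \Gbo(\theta;\top)$ into $\varphi \vdash \Gbo(\Psi;\bigvee\Psi)$ for some set $\Psi$ of $(Q,k)$-descriptions with some $\psi \in \Psi$ entailing $\theta$, which is exactly what produces the required $R_\abel$-neighborhood containing the given point in the Instantiatedness argument; your ``tight witness extraction'' is an unproved stand-in for this. So the route you propose could perhaps be repaired, but as written the finite-fixed-point claim is false and the argument does not go through; the depth-stratified descriptions plus Lemma \ref{normalformlemma} are the device the paper uses to do the coordination you correctly identify as the crux.
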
 

The soundness part of this result is checked case by case, and the easy argument is omitted. 
The completeness proof proceeds via a normal form argument, following an idea in \cite{vB16}. The adaptation to the present setting is not trivial however, since we have to deal with the new frame constraints of Non-emptines, Consistency and Instantiatedness. The main contribution here is thus to prove that the model construction satisfies these constraints. We outline the key parts of the proof below.
\begin{definition}
The \emph{modal depth} of a formula is defined inductively by:
\\\\
- $d(p) = 0$ 
\\\\
- $d(\neg \varphi) = d(\varphi)$ 
\\\\
- $d(\varphi \wedge \psi) = \mathtt{max}(d(\varphi),d(\psi))$ 
\\\\
- $d(\gpl (\Gamma;\varphi)) = \mathtt{max}(d[\Gamma \cup \{\varphi\}]) + 1$
\end{definition}
\begin{definition}
Given a finite set of propositional variables $Q$, a formula $\varphi$ is said to be a \emph{$Q$-formula} if all propositional variables appearing in $\varphi$ belong to $Q$. 

Given $k \in \omega$ and a finite set $Q$ of propositional variables, a \emph{$(Q,k)$-description} is a consistent $Q$-formula $\varphi$ of modal depth $\leq k$, such that for any $Q$-formula $\theta$ of depth $\leq k$, we have $\varphi \vdash \theta$ or $\varphi \vdash \neg \theta$.
\end{definition}
Note that there are at most finitely many $Q$-formulas of depth $\leq k$, given that $Q$ is finite. We omit the (standard) argument for this.

The key lemma for the completeness proof is the following:
\begin{lemma}
\label{normalformlemma}
Let $\gpl(\Gamma;\varphi)$ be a formula such that $\mathtt{\max}(d[\Gamma \cup \{\varphi\}]) \leq k$ and let $Q$ be a finite set of propositional variables containing all variables appearing in this formula. Then $\gpl(\Gamma;\varphi)$ is provably equivalent to some disjunction of the form:
$$\bigvee_{i \in I} \gpl(\Theta_i;\bigvee \Theta_i)$$
where $I$ is a finite set and for each $i \in I$, $\Theta_i$ is a finite set of $(Q,k)$-descriptions, such that:
\\

- every member of $\Theta_i$ provably entails $\varphi$, and

- every member of $\Gamma$ is provably entailed by some member of $\Theta_i$.
\end{lemma}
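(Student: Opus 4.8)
The plan is to argue by induction on the structure of the two component parts, first showing that each formula $\psi$ appearing in $\Gamma \cup \{\varphi\}$ is, by the normal-form machinery for $(Q,k)$-descriptions, provably equivalent to a disjunction of $(Q,k)$-descriptions $\bigvee_{j} \delta_j$. This is the standard fact that the $(Q,k)$-descriptions form a set of mutually exclusive, jointly exhaustive ``atoms'' among $Q$-formulas of depth $\leq k$: any consistent such formula is equivalent to the disjunction of all descriptions entailing it. I would establish this auxiliary fact first (it only needs the finiteness remark already granted in the text, plus classical propositional reasoning inside the proof system, which is available since $\mathbf{MP}$ and $\mathbf{RE}$ together with the propositional axioms give full Boolean reasoning).

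Next, using $\mathbf{RE}$ I would replace $\varphi$ inside $\gpl(\Gamma;\varphi)$ by $\bigvee_{j \in J}\delta_j$ where the $\delta_j$ enumerate the $(Q,k)$-descriptions entailing $\varphi$, and similarly replace each $\gamma \in \Gamma$ by its disjunction of descriptions. Then I would push the modality through these disjunctions. For the last argument of $\gpl$, the relevant principle is $\mathbf{Lem}$: repeatedly applying $\gpl(\Psi;\alpha\vee\beta) \to \gpl(\Psi\cup\{?\};\ldots)\vee\gpl(\Psi;\ldots)$-style splitting lets me break $\gpl(\Gamma;\bigvee_j \delta_j)$ into a disjunction of formulas $\gpl(\Gamma';\bigvee \Theta)$ where each such $\Theta$ is a set of descriptions — using $\mathbf{Un}$ to absorb the last argument into the side-formula slots so that everything is expressed with description-sets only, and $\mathbf{Weak}$/$\mathbf{Mon}$ to clean up. For the side-formula arguments $\Psi$, since each $\gamma \in \Gamma$ is now a disjunction of descriptions, I would use $\mathbf{Mon}$ to enlarge each $\gamma$-slot and a distribution lemma for $\gpl(\ldots,\alpha\vee\beta,\ldots;\cdot)$ over disjunction in a side argument (derivable from $\mathbf{Lem}$ and $\mathbf{Weak}$) to reduce to the case where every side formula is a single description. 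The net effect is to rewrite $\gpl(\Gamma;\varphi)$ as $\bigvee_{i\in I}\gpl(\Theta_i;\bigvee\Theta_i)$ with each $\Theta_i$ a finite set of $(Q,k)$-descriptions, arranged so that every member of $\Theta_i$ entails $\varphi$ (because we only kept descriptions below $\varphi$) and every $\gamma\in\Gamma$ is entailed by at least one member of each $\Theta_i$ that survives (the descriptions we dropped from a $\gamma$-slot were exactly those inconsistent with being realized, so a consistent disjunct must retain a witness).

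The main obstacle I anticipate is the bookkeeping in the simultaneous reduction of the last argument and the side arguments to description-sets: one must be careful that applying $\mathbf{Lem}$ to split on a description $\gamma$ does not reintroduce formulas of the wrong depth, and that after using $\mathbf{Un}$ to fold $\bigvee\Theta_i$ back in, the two stated side conditions (every $\Theta_i$-member entails $\varphi$; every $\Gamma$-member is entailed by some $\Theta_i$-member) hold for every disjunct, including discarding disjuncts that are provably inconsistent via $\mathbf{Bot}$ (when some side slot would be forced to $\bot$). I would handle this by a careful induction where the inductive statement is precisely the stated normal form, proved first for $\gpl(\gamma;\delta)$ with $\gamma,\delta$ single descriptions, then lifted to finite sets $\Gamma$ and arbitrary $\varphi$ of depth $\leq k$ by the distribution steps above; the frame-constraint axioms ($\mathbf{Non\text{-}Em}$, $\mathbf{Inst}$, $\mathbf{Cons}$) are not needed for this lemma, only for the subsequent construction of the model from the resulting normal forms.
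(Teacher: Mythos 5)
Your outline is essentially the intended proof: the paper does not prove Lemma \ref{normalformlemma} itself but defers to \cite{vB16}, where the argument runs just as you describe --- replace $\varphi$ and each $\gamma\in\Gamma$ by the disjunction of the $(Q,k)$-descriptions entailing them (the ``atoms'' fact, via Lindenbaum plus finiteness of depth-$\leq k$ formulas up to equivalence), split the last argument with (Lem), fold with (Un), distribute disjunctions out of the instantial slots, and discard inconsistent disjuncts via (Bot), with the frame axioms (Non-Em), (Inst), (Cons) playing no role. One small bookkeeping point: deriving the distribution of $\gpl$ over a disjunction in a side argument needs (Un) and (Mon) in addition to (Lem) and (Weak) (to turn $(\alpha\vee\beta)\wedge\neg\alpha$ into $\beta$ and to restore the remaining slots and the last argument), but this is exactly the kind of detail you flag and it goes through.
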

For a proof of this lemma, see \cite{vB16}.

Now fix  a finite set of propositional variables $Q$. Given a $Q$-formula $\varphi$ let $\widehat{\varphi}$ denote the equivalence class of the formula under provable equivalence. For a finite set of formulas $\Gamma$ set 
$$\widehat{\Gamma} = \{\widehat{\varphi} \mid \varphi \in \Gamma\}$$
We construct a neighborhood model $\model = (W,R,V)$ as follows:
\begin{itemize}
\item $W = \{(\widehat{\varphi},k) \mid \varphi \text{ is a $(Q,k)$-description and $k < \omega$}\}$
\item For a player $\player$ let $R_\player$ be the union of the sets $$\{((\widehat{\varphi},k + 1),\widehat{\Gamma} \times \{k\}) \in W \times \psf(W) \mid \varphi \vdash \gpl (\widehat{\Gamma};\bigvee \widehat{\Gamma}) \} $$
and
$$\{((\widehat{\varphi},0),W) \mid (\widehat{\varphi},0) \in W\} $$
\item Finally, for any propositional variable $p$, set $V(p) = \{\widehat{\varphi} \mid \varphi \vdash p\}$ if $p \in Q$, $V(p) = \emptyset $ otherwise.
\end{itemize}
Note that this is well defined, i.e. whether $(\widehat{\varphi},\widehat{\Gamma}) \in R_\player$ is independent of the choice of witnesses $\varphi,\Gamma$ of the equivalence classes. The following lemma can be proved exactly as in \cite{vb16}, and we refer to that paper for the details:
\begin{lemma}[Truth lemma]
\label{truthlemmasf}
Let $\model$ be constructed as above, and let $\psi$ be any basic formula of modal depth $\leq k$ whose propositional variables all belong to $Q$ and such that all game terms appearing in $\psi$ belong to $\tau$. Then for every $(Q,\tau,k)$-description $\varphi$, we have:
$$\model,(\widehat{\varphi},k)\Vdash \psi \text{ iff } \varphi \vdash \psi$$
\end{lemma}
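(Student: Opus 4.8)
The plan is to prove the biconditional by induction on modal depth: for each $k$, I would establish it simultaneously for all basic $Q$-formulas $\psi$ with $d(\psi)\le k$ and all $(Q,k)$-descriptions $\varphi$, running an inner induction on the structure of $\psi$. Since every formula of the shape $\gpl(\Gamma;\chi)$ has modal depth at least $1$, the base case $k=0$ contains no modality, so it reduces to the atomic and Boolean cases, which I treat uniformly for all $k$. For $\psi=p\in Q$: by definition of $V$ we have $(\widehat\varphi,k)\in V(p)$ iff $\varphi\vdash p$, and since $p$ has depth $0\le k$ and $\varphi$ is a $(Q,k)$-description, this is the desired biconditional. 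For $\psi=\neg\chi$ and $\psi=\chi_1\wedge\chi_2$ the inner hypothesis handles the immediate subformulas and one closes using that $\varphi$, being a consistent $(Q,k)$-description, decides every $Q$-formula of depth $\le k$.

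The modal case is the heart of the argument. Fix $\psi=\gpl(\Gamma;\chi)$ with $d(\psi)\le k$, so that $k\ge 1$ and $\max(d[\Gamma\cup\{\chi\}])\le k-1$. The first step is to record the precise shape of $R_\player$: from the construction, the $R_\player$-neighbourhoods of a level-$k$ world $(\widehat\varphi,k)$ are exactly the sets $\widehat\Delta\times\{k-1\}$ with $\Delta$ a finite set of $(Q,k-1)$-descriptions such that $\varphi\vdash\gpl(\widehat\Delta;\bigvee\widehat\Delta)$. For the direction ``$\model,(\widehat\varphi,k)\Vdash\psi\Rightarrow\varphi\vdash\psi$'' I would take a witnessing neighbourhood $Z=\widehat\Delta\times\{k-1\}$, apply the inner hypothesis at level $k-1$ (legitimate, since $\chi$ and every $\gamma\in\Gamma$ have depth $\le k-1$) to rewrite $Z\subseteq\tset{\chi}$ as $\bigvee\Delta\vdash\chi$ and $Z\cap\tset{\gamma}\neq\emptyset$ as ``some $\delta\in\Delta$ entails $\gamma$'', and then derive $\varphi\vdash\gpl(\Gamma;\chi)$ from $\varphi\vdash\gpl(\widehat\Delta;\bigvee\widehat\Delta)$ by pure $\mathbf{IGL}$-reasoning: \textbf{Mon} and \textbf{RE} replace the right argument $\bigvee\Delta$ by $\chi$, and \textbf{Weak} followed by \textbf{Mon} (with disjuncts $\alpha_\gamma:=\gamma$, along a choice function $\gamma\mapsto\delta_\gamma$) replaces the left arguments $\Delta$ by $\Gamma$.

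For the converse direction ``$\varphi\vdash\psi\Rightarrow\model,(\widehat\varphi,k)\Vdash\psi$'' I would invoke the normal form Lemma~\ref{normalformlemma} (with parameter $k-1$) to rewrite $\gpl(\Gamma;\chi)$ as $\bigvee_{i\in I}\gpl(\Theta_i;\bigvee\Theta_i)$ with each $\Theta_i$ a finite set of $(Q,k-1)$-descriptions, every member provably entailing $\chi$ and every $\gamma\in\Gamma$ provably entailed by some member. Since $\varphi$ is a consistent $(Q,k)$-description and each disjunct has depth $\le k$, $\varphi$ must entail some $\gpl(\widehat{\Theta_j};\bigvee\widehat{\Theta_j})$; then $Z:=\widehat{\Theta_j}\times\{k-1\}\subseteq W$ is an $R_\player$-neighbourhood of $(\widehat\varphi,k)$ by the shape of $R_\player$ recorded above, and the two side conditions $Z\subseteq\tset{\chi}$ and $Z\cap\tset{\gamma}\neq\emptyset$ again follow from the inner hypothesis at level $k-1$, so $\model,(\widehat\varphi,k)\Vdash\gpl(\Gamma;\chi)$.

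The step I expect to be the main obstacle is the propositional bookkeeping in the forward modal direction — matching a finite set $\Delta$ against $\Gamma$ via \textbf{Weak} and \textbf{Mon}, including the degenerate cases where the choice map $\gamma\mapsto\delta_\gamma$ fails to be injective or where $\Gamma=\emptyset$ — together with keeping the depth accounting tight so that every appeal to the inner inductive hypothesis is to a formula of depth $\le k-1$ at a level-$(k-1)$ world. All of this presupposes that $\model$ is a genuine instantial game model, i.e.\ that $R_\eloi$ and $R_\abel$ satisfy Non-emptiness, Instantiatedness and Consistency; verifying this is the separate and more delicate part of the completeness proof and is not needed for the present lemma.
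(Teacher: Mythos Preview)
Your sketch is correct and is exactly the argument the paper has in mind; the paper itself gives no proof of this lemma, deferring entirely to \cite{vB16}, and your outline matches that reference's method.

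One small refinement concerning the obstacle you flag in the forward modal direction: \textbf{Weak} and \textbf{Mon} alone will not close the gap when the choice map $\gamma\mapsto\delta_\gamma$ is not injective, since after \textbf{Mon} you end up with a left-argument set of size $|\{\delta_\gamma:\gamma\in\Gamma\}|<|\Gamma|$. The clean route is instead to start from $\gpl(\Delta;\chi)$ and apply \textbf{Lem} once for each $\gamma\in\Gamma$: the ``bad'' disjunct $\gpl(\Delta;\chi\wedge\neg\gamma)$ entails, via \textbf{Weak} (down to $\delta_\gamma$) and \textbf{Un}, the formula $\gpl(\delta_\gamma\wedge\chi\wedge\neg\gamma;\chi\wedge\neg\gamma)$, which is $\gpl(\bot;\dots)$ since $\delta_\gamma\vdash\gamma$, hence inconsistent by \textbf{Bot}. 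Iterating yields $\gpl(\Delta\cup\Gamma;\chi)$, and a final \textbf{Weak} gives $\gpl(\Gamma;\chi)$. Your final remark is also correct: the truth lemma is independent of the frame constraints, which are established separately in Lemma~\ref{isamodel}.
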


The addition we need to make here is the following lemma:
\begin{lemma}
\label{isamodel}
The structure $\model$ construced above is a game model, i.e. it satisfies the Non-emptiness, Consistency and Instantiatedness constraints.
\end{lemma}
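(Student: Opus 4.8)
The plan is to verify the three conditions \textbf{Non-emptiness}, \textbf{Consistency} and \textbf{Instantiatedness} for the pair $(R_\eloi[u],R_\abel[u])$ at every world $u$ of $\model$, splitting into the trivial case $u=(\widehat\varphi,0)$ and the substantive case $u=(\widehat\varphi,k+1)$. If $u=(\widehat\varphi,0)$ then by construction the only neighborhood of either player at $u$ is $W$ itself; since $\mathbf{IGL}$ is consistent there is at least one $(Q,0)$-description, so $W\neq\emptyset$, and the pair $(\{W\},\{W\})$ trivially satisfies all three conditions (for \textbf{Consistency}, $W\cap W=W\neq\emptyset$; for \textbf{Instantiatedness}, $W$ itself is the witness on the other side). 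So the work is at $u=(\widehat\varphi,k+1)$, and here I would first record a small but essential preliminary fact: $\neg\gpl\bot$ is provable for each $\player$, since instantiating \textbf{Cons} at $\bot$ gives $\gpl\bot\to\neg[\adversary]\top$ (using \textbf{RE} to rewrite $\neg\bot$ as $\top$) while \textbf{Non-Em} gives $[\adversary]\top$. Because $\varphi$ is consistent it cannot prove $\gpl(\emptyset;\bot)$, so every $R_\player$-neighborhood of $u$ genuinely has the form $\widehat\Gamma\times\{k\}$ for a \emph{nonempty} finite set $\Gamma$ of $(Q,k)$-descriptions with $\varphi\vdash\gpl(\widehat\Gamma;\bigvee\widehat\Gamma)$; ruling out the empty neighborhood is exactly what is needed below.

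For \textbf{Non-emptiness} at $u$, I would start from \textbf{Non-Em}, so $\varphi\vdash\gpl(\emptyset;\top)$, and apply the normal form Lemma~\ref{normalformlemma} (with its depth parameter equal to $k$) to rewrite $\gpl(\emptyset;\top)$ as a disjunction $\bigvee_{i\in I}\gpl(\Theta_i;\bigvee\Theta_i)$ of $Q$-formulas of modal depth $\le k+1$, each $\Theta_i$ a finite set of $(Q,k)$-descriptions. Disjuncts with $\Theta_i=\emptyset$ equal $\gpl\bot$, hence $\bot$, and can be discarded; since $\gpl(\emptyset;\top)$ is a theorem, the remaining disjunction is nonempty. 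As $\varphi$ is a $(Q,k+1)$-description it decides each disjunct, and being consistent it must prove one of them, say $\gpl(\Theta_{i_0};\bigvee\Theta_{i_0})$ with $\Theta_{i_0}\neq\emptyset$; then $\widehat{\Theta_{i_0}}\times\{k\}$ is a nonempty $R_\player$-neighborhood of $u$. Doing this for both players gives \textbf{Non-emptiness}.

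The core of the lemma, and the step I expect to be the main obstacle, is \textbf{Instantiatedness}. Suppose $u R_\eloi Z$ with $x\in Z$; write $Z=\widehat\Gamma\times\{k\}$ and $x=(\widehat\gamma,k)$ for a $(Q,k)$-description $\gamma\in\Gamma$, so $\varphi\vdash\gal(\widehat\Gamma;\bigvee\widehat\Gamma)$. The plan is: first use \textbf{Weak} to shrink the left-hand set to $\{\gamma\}$ and then \textbf{Mon} (taking $\alpha_1:=\bot$, $\beta:=\top$ and tidying up with \textbf{RE}) to obtain $\varphi\vdash\gal(\gamma;\top)$; then apply the frame axiom \textbf{Inst} to get $\varphi\vdash\gbo(\gamma;\top)$; and finally re-expand $\gbo(\gamma;\top)$ via Lemma~\ref{normalformlemma} as a disjunction $\bigvee_{i\in I}\gbo(\Theta_i;\bigvee\Theta_i)$ of depth-$(k+1)$ $Q$-formulas with each $\Theta_i$ a set of $(Q,k)$-descriptions, where the second clause of that lemma guarantees that for each $i$ some $\theta_i\in\Theta_i$ satisfies $\theta_i\vdash\gamma$. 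The crucial (and delicate) observation is that an entailment $\theta_i\vdash\gamma$ between two $(Q,k)$-descriptions forces $\widehat{\theta_i}=\widehat\gamma$ — so in particular $\Theta_i\neq\emptyset$ and $(\widehat\gamma,k)\in\widehat{\Theta_i}\times\{k\}$. Picking the disjunct $\gbo(\Theta_{i_0};\bigvee\Theta_{i_0})$ that the $(Q,k+1)$-description $\varphi$ actually proves, we get that $Z':=\widehat{\Theta_{i_0}}\times\{k\}$ is an $R_\abel$-neighborhood of $u$ containing $x$. The case starting from $u R_\abel Z$ is symmetric, using \textbf{Inst} in the converse direction. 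Together with the preliminary $\neg\gpl\bot$ fact, this repackaging of instantiated neighborhoods into the canonical shape while keeping track of the witness point $x$ is the technically delicate part; the rest is bookkeeping.

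Finally, for \textbf{Consistency} at $u$: given $u R_\eloi Z$ and $u R_\abel Z'$ with $Z=\widehat\Gamma\times\{k\}$ and $Z'=\widehat{\Gamma'}\times\{k\}$, apply \textbf{Weak} (dropping the left-hand sets entirely) to obtain $\varphi\vdash\gal\bigvee\Gamma$ and $\varphi\vdash\gbo\bigvee\Gamma'$. Then \textbf{Cons} gives $\varphi\vdash\neg\gbo\neg\bigvee\Gamma$; were $\bigvee\Gamma\wedge\bigvee\Gamma'$ inconsistent we would have $\bigvee\Gamma'\vdash\neg\bigvee\Gamma$, hence $\varphi\vdash\gbo\neg\bigvee\Gamma$ by monotonicity of $\gbo$ in its target argument (\textbf{Mon} plus \textbf{RE}), contradicting consistency of $\varphi$. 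So $\bigvee\Gamma\wedge\bigvee\Gamma'$, which is the disjunction of the conjunctions $\delta\wedge\delta'$ over $\delta\in\Gamma$ and $\delta'\in\Gamma'$, is consistent; hence some single conjunct $\delta\wedge\delta'$ is consistent, and since $\delta,\delta'$ are both $(Q,k)$-descriptions this again forces $\widehat\delta=\widehat{\delta'}$, so $(\widehat\delta,k)\in Z\cap Z'$. This completes the plan; the only remaining checks — that the two world-forms are exhaustive and that a level-$(k+1)$ world never has $W$ itself as a neighborhood (its neighborhoods are finite) — are routine.
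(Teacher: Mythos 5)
Your proposal is correct and follows essentially the same route as the paper's own proof: handling level-$0$ worlds trivially, deriving Non-emptiness from \textbf{Non-Em} via the normal form lemma, obtaining Instantiatedness through \textbf{Weak}, \textbf{Mon}, \textbf{Inst} and Lemma~\ref{normalformlemma} together with the observation that entailment between $(Q,k)$-descriptions forces provable equivalence, and getting Consistency from \textbf{Weak}, monotonicity and \textbf{Cons}. The extra details you supply (the derivation of $\neg\gpl\bot$ and the explicit ``$\varphi$ decides and hence proves some disjunct'' step) only make explicit what the paper leaves to the reader.
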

\begin{proof}
First, note that all the conditions hold for the image of each relation on an element of $W$ of the form $(\widehat{\varphi},0)$. So we can focus on the images of relations of the form $R_\player$ on states of the form $(\widehat{\varphi},k+1)$ for some $k$.

The Non-emptiness condition is proved straightforwardly using the axiom  (Non-Em), we leave this to the reader. For Instantiatedness, suppose that 
$$ (\statekp,\widehat{\Theta}\times \{k\}) \in R_\eloi$$
By definition, we get $\varphi \vdash \Gal(\Theta;\bigvee \Theta)$. Pick an element $(\widehat{\theta},k) \in \Theta \times \{k\}$. By (Weak) and (Mon) we get $ \Gal(\Theta;\bigvee \Theta) \vdash \Gal(\theta;\top)$, so $\varphi \vdash \Gal(\theta;\top)$. By the axiom (Inst) we get $\varphi \vdash \Gbo(\theta;\top)$ as well. Since $\varphi$ is a $(Q,k + 1)$-description, we can derive from Lemma \ref{normalformlemma} that there is some set $\Psi$ of $(Q,k)$-descriptions such that $\varphi \vdash \Gbo(\Psi;\bigvee \Psi)$, and such that there exists some $\psi \in \Psi$ with $\psi\vdash \theta$. But since $\psi,\theta$ are both $(Q,k)$-descriptions, clearly this means that $\widehat{\theta} = \widehat{\psi}$, so $(\widehat{\theta},k) = (\widehat{\psi},k)$. But this means that we get $(\statekp,\widehat{\Psi}\times \{k\}) \in R_\abel$ and $(\widehat{\theta},k) \in \widehat{\Psi}\times\{k\}$ as required. The converse direction is proved in the same manner.

For the Consistency condition, suppose that $(\statekp,\widehat{\Theta}\times\{k\}) \in R_\eloi$ and  $(\statekp,\widehat{\Theta'}\times\{k\}) \in R_\abel$. It is straightforward to prove, using that $\Theta$ and $\Theta'$ are both sets of $(Q,k)$-descriptions, that if $\widehat{\Theta}\times\{k\}$ does not intersect  $\widehat{\Theta'}\times\{k\}$ then in fact $\bigvee \Theta' \rightarrow \neg \bigvee \Theta$.  But we have $\varphi \vdash \gal(\Theta;\bigvee \Theta)$, hence $\varphi \vdash \gal \bigvee \Theta$ by the axiom schema (Weak). Furthermore we have:
$$\varphi \vdash \Gbo(\Theta';\bigvee\Theta') \vdash \Gbo\bigvee\Theta' \vdash \Gbo \neg \bigvee\Theta $$
But then $\varphi \vdash  \Gal \bigvee \Theta \wedge\Gbo\neg \bigvee\Theta $, and it follows from the axiom schema (Cons) that $\varphi$ cannot be consistent, which contradicts our assumption that $\varphi$ was a $(Q,k+1)$-description.
\end{proof}

Combining Lemmas \ref{isamodel} and \ref{truthlemmasf} with the easy observation that any consistent basic formula of depth $\leq k$, variables in $Q$ and atomic games among $\tau$ is provably entailed by some $(Q,k)$-description\footnote{This follows from Lindenbaum's lemma together with the observation that there are at most finitely many formulas of depth $\leq k$, variables in $Q$ and game terms among $\tau$ up to provable equivalence, so we can take a conjunction of all representatives of each equivalence class of such formulas belonging to a given maximal consistent set.}, we obtain Theorem \ref{completeness}.

As a corollary to this proof, we get:
\begin{theorem}
The logic IGL is decidable and has the effective finite model property.
\end{theorem}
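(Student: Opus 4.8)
The plan is to extract a finite model property directly from the completeness construction of Theorem~\ref{completeness}, and then obtain decidability in the standard way from a recursive axiomatization together with a computable bound on countermodel size. The key point is that both the verification that the canonical structure is a model (Lemma~\ref{isamodel}) and the Truth Lemma (Lemma~\ref{truthlemmasf}) are \emph{local}: they only ever relate worlds lying in finitely many adjacent levels, so the infinite model $\model$ may be truncated to a bounded number of levels without damage.

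Concretely, fix a finite set $Q$ of propositional variables and a bound $n < \omega$, and take the model $\model = (W,R,V)$ built in the proof of Theorem~\ref{completeness} for this $Q$. I would pass to the submodel $\model_n$ with domain $W_n = \{(\widehat{\varphi},k) \in W \mid k \leq n\}$, restricting each $R_\player$ to $W_n \times \psf(W_n)$ and redefining the neighbourhoods of the level-$0$ worlds to be the single set consisting of all level-$0$ worlds of $W_n$. Since there are, up to provable equivalence, only finitely many $Q$-formulas of each modal depth $\leq n$ (as already noted), $W_n$ is finite, and in fact $\card{W_n}$ is bounded by a function of $\card{Q}$ and $n$ that is computable by an obvious recursion on depth: at depth $k+1$ a formula is, up to provable equivalence, a Boolean combination of depth-$\leq k$ formulas and of modalities $\gpl(\Gamma;\chi)$ whose components $\Gamma,\chi$ are built from representatives of the finitely many depth-$\leq k$ classes.

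Next I would check that $\model_n$ is still an instantial game model. The level-$0$ worlds are trivial: setting $R_\eloi[u] = R_\abel[u] = \{X\}$ with $X$ the nonempty set of all level-$0$ worlds satisfies Non-emptiness, Instantiatedness and Consistency. For a world of level $k+1 \leq n$ nothing changes, since its neighbourhoods are sets $\widehat{\Gamma}\times\{k\}$ already contained in $W_n$; and the arguments in Lemma~\ref{isamodel} for such a world invoke only worlds of levels $k$ and $k+1$ together with the axioms, so they transfer verbatim to $\model_n$ (the $\Psi$ produced via Lemma~\ref{normalformlemma} consists of $(Q,k)$-descriptions, hence lies at level $k < n$). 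For the same reason, the Truth Lemma continues to hold in $\model_n$ for every IGL-formula of modal depth $\leq n$ evaluated at a world whose level is at least its depth — the recursion only ever descends to lower levels, all present in $W_n$, and the redefinition at level $0$ is harmless since depth-$0$ formulas do not depend on the neighbourhood relations.

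Finally I would assemble the corollary. If $\varphi$ is a consistent IGL-formula, let $n = d(\varphi)$ and let $Q$ be its variables; by Lindenbaum together with the observation that any consistent formula of depth $\leq n$ with variables in $Q$ is provably entailed by some $(Q,n)$-description, $\varphi$ is entailed by some $(Q,n)$-description $\psi$, and the Truth Lemma gives $\model_n,(\widehat{\psi},n)\Vdash\psi$, hence $\model_n,(\widehat{\psi},n)\Vdash\varphi$. Thus every consistent formula is satisfiable in a finite instantial game model whose size is bounded by the computable function above, which with Theorem~\ref{completeness} yields the effective finite model property. Decidability follows: to test validity of $\varphi$, check $\varphi$ against all neighbourhood structures of cardinality up to that bound which satisfy Non-emptiness, Instantiatedness and Consistency — a decidable task, as these constraints and model-checking in a finite instantial game model are both decidable (alternatively, one may invoke the routine double-enumeration argument from recursive axiomatizability plus the finite model property). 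I expect the only genuine subtlety to be spelling out the locality of Lemmas~\ref{isamodel} and~\ref{truthlemmasf} that makes the truncation cost nothing; the computable size bound, though non-elementary, raises no difficulty.
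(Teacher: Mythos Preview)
Your proposal is correct and is precisely the argument the paper has in mind: the theorem is stated there only as a corollary of the completeness proof, with no further detail, and your truncation of the layered model $\model$ to levels $\leq n$ is the natural way to make that corollary explicit. The locality observations you single out (that Lemma~\ref{isamodel} and Lemma~\ref{truthlemmasf} only relate adjacent levels, and that the redefinition at level~$0$ is harmless because only propositional formulas are evaluated there) are exactly what is needed, and your derivation of decidability from the computable size bound is standard.
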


IGL is a high-level logic of basic powers in social interaction. The reader may find it of interest to see what the above axioms say when read as statements about games.

\section{Adding game operations}
Our third contribution in this paper concerns the addition of structure to games, in the form of natural game operations. 

\subsection{Game algebra of strong powers}
In this section we use some basic concepts of universal algebra, see for example \cite{BS81}. For simplicity, we restrict attention to finite games, so that $\mathbb{G}(\{\eloi,\abel\},O)$ is now the set of finite games with outcomes in $O$. Thus the outcome map of a game $\game$ can be viewed a map $o$ from the leaves in $\game$ into $O$.

Consider a set of games on a fixed set of outcomes $O$. We define operations in a standard manner, with binary $+,\times$ corresponding to choice for $\eloi,\abel$ respectively, and a unary operation $-$ for game dual (`role switch'). The game $\game_1 + \game_2$ ($\game_1 \times \game_2$) is defined as follows: let $\game_1 = (\mathcal{T}_1,t_1,o_1,\Pi_1)$ and let $\game_2 = (\mathcal{T}_2,t_2,o_2,\Pi_2)$.  We first construct the tree $\mathcal{T}'$ by adding a new root $r$ with two successors, and the left successor is the root of a subtree isomorphic with $\mathcal{T}_1$ via a fixed isomorphism $i_1$, the right successor is the root of a subtree isomorphic with $\mathcal{T}_2$ via a fixed isomorphism $i_2$. The turn function $t'$ is defined by setting $t'(r) = \eloi$  ($t'(r) = \abel$). For a node $u$ in the subtree corresponding to the left successor of the root $r$ we set $t'(u) = t_1(i_1(u))$ and similarly for a node $u$ in the subtree corresponding to the right successor of the root $r$ we set $t'(u) = t_2(i_2(u))$. The outcome map $o'$ is defined by setting $o'(l) = o_1(i_1(l))$ for a leaf in the subtree corresponding to the left successor of $r$, and $o'(l) = o_2(i_2(l))$ for a leaf in the subtree corresponding to the right successor of $r$. We define a partition $\Pi'$ by setting $$\Pi' = \{\{r\}\} \cup \{i_1^{-1}[Z] \mid Z \in \Pi_1\} \cup \{i_2^{-1}[Z] \mid Z \in \Pi_2\}.$$
The game $\game_1 + \game_2$ ($\game_1 \times \game_2$) is then defined as $(\mathcal{T}',t',o',\Pi')$.

 The construction of $-\game_i$ is much simpler, it merely changes the turn assignment by switching players at each position, otherwise keeping everything the same. 
\begin{proposition}
Strong power equivalence is a congruence on the algebra $ \langle \mathbb{G}(\{\eloi,\abel\},O), +,\times,-\rangle$.
\end{proposition}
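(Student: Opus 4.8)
Here is the argument I would give.

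\medskip

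The plan is to reduce the claim to three closed-form computations expressing the basic powers of a compound game in terms of those of its components, after which congruence is immediate. Concretely, I would prove that for all games $\game_1,\game_2 \in \mathbb{G}(\{\eloi,\abel\},O)$:
\begin{enumerate}
\item $B_\eloi(-\game_1) = B_\abel(\game_1)$ and $B_\abel(-\game_1) = B_\eloi(\game_1)$;
\item $B_\eloi(\game_1 + \game_2) = B_\eloi(\game_1) \cup B_\eloi(\game_2)$ and $B_\abel(\game_1 + \game_2) = \{P \cup Q \mid P \in B_\abel(\game_1),\ Q \in B_\abel(\game_2)\}$;
\item $B_\abel(\game_1 \times \game_2) = B_\abel(\game_1) \cup B_\abel(\game_2)$ and $B_\eloi(\game_1 \times \game_2) = \{P \cup Q \mid P \in B_\eloi(\game_1),\ Q \in B_\eloi(\game_2)\}$.
\end{enumerate}
Granting these identities, congruence follows by pure substitution: if $\game_1 \simeq \game_1'$ and $\game_2 \simeq \game_2'$, then for each player $a$ the right-hand sides above are unchanged when each $\game_i$ is replaced by $\game_i'$, so $\game_1 + \game_2 \simeq \game_1' + \game_2'$, $\game_1 \times \game_2 \simeq \game_1' \times \game_2'$, and $-\game_1 \simeq -\game_1'$, which is exactly the congruence property for $+,\times,-$.

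For (1), the construction of $-\game_1$ changes only the turn function, leaving the tree, the outcome map and the partition intact; hence the identity map is a bijection between the strategies of $\abel$ in $\game_1$ and the strategies of $\eloi$ in $-\game_1$, and it preserves $\mathsf{Match}(\sigma)$ and therefore the set $\{o(m) \mid m \in \mathsf{Match}(\sigma)\}$. So $B_\eloi(-\game_1) = B_\abel(\game_1)$, and symmetrically $B_\abel(-\game_1) = B_\eloi(\game_1)$.

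For (2), I work directly with the construction of $\game_1 + \game_2$: let $r$ be the new root, owned by $\eloi$, with left child $l$ (root of the embedded copy of $\mathcal{T}_1$) and right child $q$ (root of the embedded copy of $\mathcal{T}_2$), and recall that $\Pi'$ is $\{\{r\}\}$ together with disjoint copies of $\Pi_1$ and $\Pi_2$. Because the two subtrees carry separate copies of $\Pi_1,\Pi_2$ and $\{r\}$ is its own cell, restricting any player's strategy to the left (resp.\ right) subtree gives a legitimate strategy of $\game_1$ (resp.\ $\game_2$), and conversely any pair of such strategies — together, for $\eloi$, with a choice of $l$ or $q$ at $r$ — assembles into a strategy of $\game_1 + \game_2$; moreover a strategy's outcome set depends only on this data. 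For $\eloi$: a strategy $\sigma$ fixes $\sigma(r)\in\{l,q\}$, every $\sigma$-guided match begins with that move and is then guided by the corresponding restricted strategy, so the outcome set of $\sigma$ lies in $B_\eloi(\game_1)$ or in $B_\eloi(\game_2)$ according to $\sigma(r)$, and conversely each element of $B_\eloi(\game_1)$ (resp.\ $B_\eloi(\game_2)$) is realized by taking $\sigma(r)=l$ (resp.\ $q$) and a witness below. For $\abel$: $r$ is not an $\abel$-node, so a strategy $\tau$ imposes no constraint at $r$, and a match is $\tau$-guided iff it goes left and is guided by the left restriction of $\tau$, or goes right and is guided by the right restriction; both options occur since the (finite) subtrees are non-empty. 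Hence the outcome set of $\tau$ is exactly the union of the outcome set of $\tau$'s left restriction in $\game_1$ with that of its right restriction in $\game_2$, and every such union is attained, which gives the two identities in (2). Item (3) then follows by the same argument with the roles of $\eloi$ and $\abel$ interchanged at the new root (equivalently, $\game_1 \times \game_2$ is isomorphic to $-((-\game_1) + (-\game_2))$, and $\simeq$ is obviously invariant under isomorphism, so (3) follows formally from (1) and (2)).

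I do not expect any serious obstacle. The only delicate points are bookkeeping: a strategy is formally a total function on a player's positions, so one must explicitly pass to restrictions to the two subtrees and discard the ``irrelevant half'' when reading off the outcome set; and one must verify that the information-partition constraint in the definition of strategy transfers faithfully between $\game_i$ and its embedded copy in $\game_1 + \game_2$ — both handled by the decomposition of $\Pi'$ noted above. The final write-up should be short.
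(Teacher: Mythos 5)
Your proposal is correct, and it proceeds by what is surely the intended argument: the paper states this proposition without proof, and the natural justification is exactly your compositional computation of basic powers ($B_\eloi(\game_1+\game_2)=B_\eloi(\game_1)\cup B_\eloi(\game_2)$, $B_\abel(\game_1+\game_2)=\{P\cup Q \mid P\in B_\abel(\game_1),\, Q\in B_\abel(\game_2)\}$, the dual identities for $\times$, and the player swap for $-$), from which congruence follows by substitution. Your handling of the only delicate points — that $\Pi'$ never merges cells across the two subtrees or with the new root, so strategies restrict and recombine faithfully, and that every strategy in a finite game has at least one guided match so both halves of an $\abel$-strategy contribute — is what makes the identities go through, so nothing is missing.
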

This motivates the following definition:
\begin{definition}
The \emph{strong algebra of games} $\mathfrak{G}$ (with outcomes $O$) is the quotient:
$$ \langle \mathbb{G}(\{\eloi,\abel\},O), +,\times,-\rangle/\simeq$$
\end{definition}

The equational theory of the algebra $\mathfrak{G}$ is of special interest here, as it can be viewed as a new weaker propositional logic, where distributivity fails, witness our example in Figure \ref{twogames}. By contrast, for standard power equivalence, this algebra is known to be a distributive \emph{de Morgan algebra}.

But with strong power equivalence, much more basic principles than distributivity fail. For instance, the operations $\times$ and and $+$ are not idempotent, the following equations fail:
 $$ x \times x = x \quad \quad \quad x + x = x$$
For the first failure, take a two-player game $\game$ in which $\eloi$ has the first move, and simply chooses between two outcomes $0,1$. Then $\{0,1\}$ is a basic power of $\eloi$ in $\game \times \game$, but not in $\game$. 

Still, many laws known from game algebra do go through. 
\begin{proposition}
The following equations hold in $\mathfrak{G}$:
\begin{description}
\item[Associativity] $x + (y + z) = (x + y) + z$, \\ $x \times (y \times z) = (x \times y) \times z$
\item[Commutativity] $x + y  = y + x$, \;  $x \times y = y \times x$
\item[Double Negation] $-- x = x$
\item[De Morgan] $-(x + y) = -x \times -y$, \; $-(x \times y) = -x + -y$
\end{description}
\end{proposition}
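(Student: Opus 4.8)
\textbf{Proof plan for the final Proposition (Associativity, Commutativity, Double Negation, De Morgan in $\mathfrak{G}$).}

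The plan is to verify each equation at the level of basic powers, using the fact (already established) that $\simeq$ is a congruence, so it suffices to show for each listed identity $s = t$ that $B_\player(s^{\mathbb{G}}) = B_\player(t^{\mathbb{G}})$ for both players $\player \in \{\eloi,\abel\}$ and all instantiations by games. For Double Negation and De Morgan the slickest route is to avoid computing powers from scratch: I would first prove the auxiliary fact $B_\eloi(-\game) = B_\abel(\game)$ and $B_\abel(-\game) = B_\eloi(\game)$, which is immediate from the construction of $-\game$ (it only swaps turn assignments, leaving the tree and outcome map untouched, so a strategy for $\eloi$ in $\game$ is literally a strategy for $\abel$ in $-\game$ with the same set of guided matches, and conversely). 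Double Negation is then trivial since $--\game$ has exactly the same turn function as $\game$. For De Morgan, I would then only need $B_\player(\game_1 + \game_2)$ and $B_\player(\game_1 \times \game_2)$ for the player who does \emph{not} own the new root, combined with the role-swap fact; so the real content reduces to analyzing the two binary operations.

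The key computation is therefore: describe $B_\player(\game_1 * \game_2)$ for $* \in \{+,\times\}$. Take $* = +$ (so $\eloi$ owns the new root $r$). A strategy $\sigma$ for $\eloi$ in $\game_1 + \game_2$ picks a direction at $r$, say left, and then a strategy $\sigma_1$ for $\eloi$ in $\game_1$; its set of outcomes of guided matches is exactly $\{o_1(m) : m \in \mathsf{Match}(\sigma_1)\}$, i.e. a basic power of $\eloi$ in $\game_1$. Hence $B_\eloi(\game_1 + \game_2) = B_\eloi(\game_1) \cup B_\eloi(\game_2)$. For $\abel$ in $\game_1 + \game_2$: $\abel$ does not control $r$, so a strategy for $\abel$ is a pair $(\tau_1,\tau_2)$ of strategies in the two subgames, and the set of outcomes of guided matches is $\{o_1(m) : m \in \mathsf{Match}(\tau_1)\} \cup \{o_2(m) : m \in \mathsf{Match}(\tau_2)\}$; thus $B_\abel(\game_1 + \game_2) = \{X_1 \cup X_2 : X_1 \in B_\abel(\game_1),\, X_2 \in B_\abel(\game_2)\}$. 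Dually, with $\times$ and the roles of the players swapped, $B_\abel(\game_1 \times \game_2) = B_\abel(\game_1) \cup B_\abel(\game_2)$ and $B_\eloi(\game_1 \times \game_2) = \{X_1 \cup X_2 : X_1 \in B_\eloi(\game_1),\, X_2 \in B_\eloi(\game_2)\}$. From these two closed forms both Associativity and Commutativity follow immediately, since $\cup$ is associative and commutative on sets and on the ``pointwise union of families'' operation $(\mathcal{F},\mathcal{G}) \mapsto \{X\cup Y : X \in \mathcal{F}, Y\in \mathcal{G}\}$; one checks each player's basic-power set for $x+(y+z)$ and $(x+y)+z$ reduces to the same three-fold union, and similarly for $\times$ and for commutativity. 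De Morgan then drops out: $B_\eloi(-(\game_1+\game_2)) = B_\abel(\game_1+\game_2) = \{X_1\cup X_2 : X_i\in B_\abel(\game_i)\} = \{X_1 \cup X_2 : X_i \in B_\eloi(-\game_i)\} = B_\eloi(-\game_1 \times -\game_2)$, and symmetrically for $\abel$ and for the other De Morgan law.

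The main obstacle is not conceptual but bookkeeping: one must be careful that ``a strategy for $\abel$ in $\game_1+\game_2$'' genuinely decomposes as an \emph{independent} pair of strategies in the two subgames (it does, because the two subtrees are disjoint and $\abel$ has no choice at $r$), and that the set of outcomes of $\sigma$-guided matches in a subtree sitting under $r$ is computed solely within that subtree — here the only subtlety is that the isomorphisms $i_1,i_2$ must be used to transport strategies and matches back and forth, and that the information-partition cells $\Pi'$ never connect the two subtrees or touch $r$, so no cross-subgame correlation is forced on either player. Once these facts are nailed down the rest is the elementary algebra of $\cup$ noted above. I would present the two closed-form lemmas for $B_\player(\game_1 * \game_2)$ explicitly, prove them in a short paragraph each, and then dispatch all four equation groups in a few lines.
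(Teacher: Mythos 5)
Your argument is correct, and since the paper states this proposition without giving a proof, your route---establishing the closed forms $B_\eloi(\game_1+\game_2)=B_\eloi(\game_1)\cup B_\eloi(\game_2)$, $B_\abel(\game_1+\game_2)=\{X_1\cup X_2 : X_i\in B_\abel(\game_i)\}$ (and dually for $\times$), plus the role-swap fact $B_\player(-\game)=B_{\overline{\player}}(\game)$, then reducing everything to associativity and commutativity of $\cup$---is exactly the evident intended argument; your care about the partition $\Pi'$ not linking the two subtrees is the one point that genuinely needs saying in this imperfect-information setting. Two small remarks: contrary to your parenthetical, the $\abel$-side of each De Morgan law also uses the root-owner closed forms (e.g.\ $B_\abel(-(\game_1+\game_2))=B_\eloi(\game_1+\game_2)=B_\eloi(\game_1)\cup B_\eloi(\game_2)$), but you have all four closed forms anyway, so nothing is missing; and Double Negation and De Morgan can be had even more cheaply, since $--\game$ is literally $\game$ and $-(\game_1+\game_2)$ is, up to the choice of the embeddings $i_1,i_2$, literally the game $-\game_1\times-\game_2$, so these identities hold on the nose and not merely up to $\simeq$.
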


\subsection{Dynamic game logic for basic powers}
Our game algebra still misses one important operation, namely, \emph{sequential composition}. For this operation to make sense, we need to take a dynamic view of games as state-transforming processes, in the style of dynamic game logic (cf. \cite{parikh85}, \cite{van200720},\cite{vB14}). 
\begin{definition}
A \emph{dynamic two-player game} over a set $X$ (of ``states'') is a map $g : X \to \mathbb{G}(\{\eloi,\abel\},X)$, assigning a game with outcome set $X$ to each state in $X$. We denote the set of dynamic two-player games over $X$ by $ \mathbb{D}(\{\eloi,\abel\},X)$.
\end{definition}
The operations $+,\times$ and $-$ are naturally lifted to dynamic games by defining them component-wise in an obvious manner, and so are the relations $\sim,\simeq$ of power equivalence and strong power equivalence. Now, given dynamic games $g_1,g_2 : X \to \mathbb{G}(\{\eloi,\abel\},X)$, we can define the \emph{sequential composition} $g_1 \circ g_2$ by letting $g_1 \circ g_2(u)$ be constructed by replacing each leaf $l$ in $g_1(u)$ by a copy of the game tree $g_2(o_1(l))$, where $o_1$ is the outcome map associated with $g_1(u)$.

In this way, we get an extended game algebra. For power equivalence, the complete algebra of the propositional operations plus sequential composition has been axiomatized in \cite{gorankogamealgebra,venemagamealgebra}. For basic powers and strong equivalence in our new sense, however, this is an open problem. 
\smallskip

However, we now run into some unexpected trouble:
\begin{proposition}
Let $O = \{x,y\}$. Then the relation of strong power equivalence over $\mathbb{D}(\{\eloi,\abel\},O)$ is not a congruence with respect to sequential game composition. 
\end{proposition}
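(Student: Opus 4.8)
The plan is to exhibit two dynamic games $g_1, g_2 : O \to \mathbb{G}(\{\eloi,\abel\},O)$ with $O = \{x,y\}$ that are strongly power equivalent, together with a fixed "witness" dynamic game $h : O \to \mathbb{G}(\{\eloi,\abel\},O)$ such that $g_1 \circ h \not\simeq g_2 \circ h$ (or alternatively $h \circ g_1 \not\simeq h \circ g_2$). The natural place to look is the phenomenon already identified in the excerpt: strong power equivalence is more fine-grained than power equivalence precisely because a basic power records \emph{what a player leaves open to the other}, and composition can "resolve" those open choices in a way that depends on the order in which outcomes were reached — i.e. on information that strong power equivalence does \emph{not} retain about a game, namely the \emph{correlation} between a player's strategy and which particular outcome in a basic power results. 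Since a basic power is just a set of outcomes, two games can have the same family of basic power sets for each player while differing in how individual profiles map onto those outcomes; sequential composition then plugs a potentially different game tree into each leaf, and the discrepancy surfaces.

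Concretely, I would first pick $g_1$ and $g_2$ to be (the constant dynamic games induced by) two one-state strategic-form games over $\{x,y\}$ that are strongly power equivalent but not strategic form equivalent — the excerpt already supplies exactly such a pair of matrices (over $\{0,1\}$, which we relabel $\{x,y\}$): the $3\times 3$ matrix and the $2\times 3$ matrix displayed just after Proposition~3. By the Proposition there, these are strongly power equivalent; and the excerpt explicitly notes the middle-upper profile of the right-hand matrix is not bisimilar to anything on the left, which is the seed of the failure. Second, I would choose $h$ so that $h(x)$ and $h(y)$ are \emph{different} games — say $h(x)$ is a trivial game with unique outcome $x$ and $h(y)$ is a game in which $\eloi$ moves first and chooses freely between $x$ and $y$ (so that $B_\eloi(h(y)) = \{\{x\},\{y\}\}$ while $B_\eloi(h(x)) = \{\{x\}\}$). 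Third, I would compute $B_\eloi(g_i \circ h)$ for $i = 1,2$ directly from the definition of $\circ$ (replace each leaf $l$ of $g_i$ by $h(o_i(l))$): a strategy for $\eloi$ in $g_i \circ h$ is a strategy $\sigma$ in $g_i$ together with, for each leaf reachable under $\sigma$, a strategy in the pasted-in copy of $h$; its basic power is the union over $\abel$-responses of the resulting leaf outcomes. The point is that in the game where $\eloi$ has an "honest" strategy yielding basic power $\{x,y\}$ already at the $g_i$-level (right matrix), after composing with $h$ that strategy, extended appropriately in the $h(y)$-copy, can be refined to yield a basic power set on the left that has no counterpart on the other side — because the left game forces $\eloi$ into $h(x)$-copies whenever it tries to secure certain outcome-combinations. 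I would then exhibit one explicit set $Z \subseteq O$ with $Z \in B_\eloi(g_{i_0}\circ h) \setminus B_\eloi(g_{1-i_0}\circ h)$, which completes the proof.

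The main obstacle is choosing the pair $(g_1,g_2)$ and the witness $h$ so that the bookkeeping actually closes: one must verify $g_1 \simeq g_2$ (immediate from the cited Proposition once the matrices are fixed) \emph{and} carefully enumerate the basic powers of both composites to pin down a separating set. This enumeration is finite but fiddly, because $B_\eloi(g \circ h)$ depends on the full profile structure of $g$, not just on $B_\eloi(g)$ and $B_\abel(g)$ — indeed that dependence is the whole content of the proposition. I expect the cleanest route is to keep $h$ as asymmetric and simple as possible (one branch trivial, the other a single free binary choice by $\eloi$) so that composing with $h$ acts as a "magnifying glass" that turns the non-bisimilar profile noted in the excerpt into a genuinely new basic power on one side only; verifying that the other side cannot match it is then a short case analysis over the at-most-two $\eloi$-strategies of the $2\times 3$ game versus the three $\eloi$-strategies of the $3\times 3$ game, each extended over the pasted copies of $h$. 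A secondary subtlety to check is that the lifted operations and $\simeq$ on $\mathbb{D}(\{\eloi,\abel\},O)$ are being used exactly as defined component-wise in the excerpt, so that "strongly power equivalent dynamic games" unambiguously means $g_1(u) \simeq g_2(u)$ for every state $u$ — here there is only one relevant state-value pattern to track since $O$ has just two elements.
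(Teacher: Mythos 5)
Your overall plan has the only possible shape (exhibit $g_1\simeq g_2$ and a witness $h$ with inequivalent composites), but the concrete instantiation does not work, and the verification you defer as ``fiddly but finite bookkeeping'' in fact fails. Take the two matrices with the relabeling $0\mapsto x$, $1\mapsto y$ and your $h$ (so $h(x)$ is trivial with outcome $x$, and $h(y)$ is a free binary choice by $\eloi$). A strategy for $\eloi$ in $g_i\circ h$ amounts to a row plus a choice in each pasted copy along that row, so its basic power is the union, over the cells of the row, of one chosen member of $B_\eloi(h(v))$ for the cell's outcome $v$; dually for $\abel$ with columns. Carrying this out, both composites have $B_\eloi=\{\{x\},\{x,y\}\}$ and $B_\abel=\{\{x\},\{x,y\}\}$; with the opposite relabeling both have $B_\eloi=\{\{x\},\{y\},\{x,y\}\}$ and $B_\abel=\{\{x,y\}\}$. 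So $g_1\circ h\simeq g_2\circ h$ and no separating set $Z$ exists. The failure is not repairable by tuning $h$: the composite's basic powers depend only on the multiset of outcomes occurring in each row (for $\eloi$) and each column (for $\abel$), since each occurrence of $v$ contributes one freely chosen member of $B_\player(h(v))$ to a union. Over the two-element outcome set $\{x,y\}$, unions of two and of three members of a family coincide, and the all-$0$ row and all-$0$ column present in both matrices absorb the remaining differences, so \emph{no} $h$ separates these two matrices under right composition; and your fallback $h\circ g_1$ versus $h\circ g_2$ can never work, because $B_\player(h\circ g)$ depends on $g$ only through the families $B_\player(g(\cdot))$, which agree for strongly power equivalent games. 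In short, the profile-level (strategic-form) non-bisimilarity of the two matrices is not the mechanism that breaks congruence.

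What does break it, and what the paper exploits, is duplication of leaves carrying the \emph{same} outcome: basic powers cannot count how many guided leaves carry a given outcome, but after composition a player may act differently in the distinct pasted copies. The paper's witness is much simpler than yours: let $g_1(u)$ be the game where $\eloi$ has a single move to a leaf with outcome $x$, let $g_2(u)$ be the game where $\eloi$ chooses between two leaves both labelled $x$ (so $g_1\simeq g_2$, all basic-power families being $\{\{x\}\}$), and let $h(x)$ be the game in which $\abel$ chooses between $x$ and $y$. Then in $g_2\circ h$ player $\abel$ can choose differently in the two copies and thus has the basic power $\{x,y\}$, which is unavailable in $g_1\circ h$; these composites are exactly the two games of Figure 2. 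If you want to salvage your write-up, replace your pair of matrices by a pair differing in such outcome multiplicities along some strategy, rather than relying on strategic-form inequivalence.
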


To see why this is so, consider the two perfect information games displayed in Figure 2, which have an obvious instantiantial neighborhood bisimulation between them.  The games are not strongly power equivalent, since player $\abel$ has a basic power $\{x, y\}$ in the game to the right, but not to the left. But both games can clearly be obtained as the sequential composition of strongly power equivalent games:

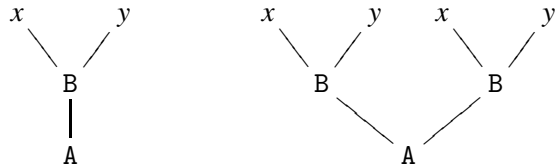
\begin{figure}[H]
\[
\xymatrixrowsep{0.5cm}
\xymatrixcolsep{0.05cm}
 \xymatrix{x & & & & y    \\
 & &   \abel    \ar@{-}[ull]\ar@{-}[urr] & &    \\
 & & \eloi \ar@{-}[u]  & &   }
\quad\quad\quad \quad
\xymatrixrowsep{0.5cm}
\xymatrixcolsep{0.05cm}
 \xymatrix{x & & & & y & & x & & & & y    \\
 & &  \abel \ar@{-}[ull] \ar@{-}[urr]    & & & & & &    \abel\ar@{-}[ull]\ar@{-}[urr] & &    \\
 & & & & & \eloi \ar@{-}[ulll] \ar@{-}[urrr] & &  & &  &  }
\]
\caption{A threat to compositionality: instantial bisimulation does not preserve basic powers.}
\end{figure}

This failure might seem a serious challenge to the compositional methodology of dynamic game logic. But when we analyze what goes wrong in the example, the reason is the \emph{functional} character of strategies. They force a unique choice at each turn, making $\abel$ too specific in the game on the left.

\medskip 
To remedy this situation, we suggest to widen the notion of a strategy to allow \emph{non-determinism}, so that strategies may constrain the moves of a player, but not determine them uniquely. This is not altogether foreign in game theory: in fact, mixed strategies can be interpreted in a similar way, and the same move has also been defended for the broader notion of a `plan' in \cite{vB14}.
\begin{definition}
A \emph{relational strategy} for player $\player$ in a game $\game = (\mathcal{T},t,o,\Pi)$ is a binary relation $\sigma$ over $\mathcal{T}$ such that:
\begin{itemize}
\item $\sigma[u] \neq \emptyset$ whenever $u \in t^{-1}[\player]$, and
\item $\sigma[u] = \sigma[v]$ if $u,v$ are in the same partition cell in $\Pi$.
\end{itemize}
\smallskip
The set $\mathsf{Match}(\sigma)$ of non-deterministic matches guided by a strategy $\sigma$ is defined in the obvious way. We say that $P \subseteq O$ is a \emph{basic relational power} of $\player$ if there is a relational strategy $\sigma$ for $\player$ in $\game$ such that $P = \{o(m) \mid m \in \mathsf{Match}(\sigma)\}$. The set of basic relational powers of $\player$ is denoted by $R_\player(\game)$. We say that $\game_1,\game_2$ are \emph{semi-strongly power equivalent} if $R_\player(\game_1) = R_\player(\game_2)$, for each $\player \in \{\eloi,\abel\}$. Finally, we write $\game_1 \equiv \game_2$ when $\game_1,\game_2$ are semi-strongly power equivalent.
\end{definition} 
The relation of semi-strong power equivalence can be lifted to an equivalence relation between dynamic games in the same component-wise manner as before. 

\medskip

The move to relational strategies does not trivialize the new equivalence concept proposed in this paper. The two games in our running example of Figure \ref{twogames} are not semi-strongly power equivalent, so this slightly coarser equivalence notion is still fine enough to make the distinction that we wanted. Furthermore, we get the result we are after:
\begin{proposition}
The relation $\equiv$ of semi-strong power equivalence over $\mathbb{D}(\{\eloi,\abel\},O)$ is a congruence with respect to the operations $+,\times,-$ plus sequential game composition $\circ$. 
\end{proposition}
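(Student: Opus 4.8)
The plan is to push everything down to static games in $\mathbb{G}(\{\eloi,\abel\},O)$ and then to compute the basic relational powers of a compound game from those of its parts. Since $+$, $\times$ and $-$ act component-wise on dynamic games, and since $(g_1 \circ g_2)(u)$ is obtained from $g_1(u)$ by replacing each leaf $l$ with a copy of $g_2(o_1(l))$, it is enough to establish, for static games: (a) $\equiv$ is a congruence for $+$, $\times$ and $-$; and (b) if $\game \equiv \game'$ and $h(v) \equiv h'(v)$ for every outcome $v$, then $\game[h] \equiv \game'[h']$, where $\game[h]$ is the game obtained from $\game$ by replacing each leaf $l$ with a copy of $h(o(l))$. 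Part (a) is routine: $-$ just swaps the players, so $R_\eloi(-\game) = R_\abel(\game)$ and $R_\abel(-\game) = R_\eloi(\game)$; for $+$ one reads off from the construction that
\[ R_\eloi(\game_1 + \game_2) = R_\eloi(\game_1) \cup R_\eloi(\game_2) \cup \{\, P_1 \cup P_2 : P_i \in R_\eloi(\game_i) \,\} \]
(the new root is $\eloi$'s, who may commit to one subgame or keep both open) and $R_\abel(\game_1 + \game_2) = \{\, P_1 \cup P_2 : P_i \in R_\abel(\game_i)\,\}$ ($\abel$ makes no move at the root); both right-hand sides depend only on the basic relational powers of $\game_1$ and $\game_2$, and since $\game_1 \times \game_2 = -\big((-\game_1)+(-\game_2)\big)$ up to isomorphism, $\times$ is handled too.

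The substitution claim (b) is the core of the argument. A relational strategy $\sigma$ for $\player$ in $\game[h]$ restricts to a relational strategy $\sigma_1$ for $\player$ in $\game$ together with, for each leaf $l$ of $\game$, a relational strategy $\tau_l$ for $\player$ in $h(o(l))$; and conversely any such family assembles into a relational strategy for $\player$ in $\game[h]$. A $\sigma$-guided match of $\game[h]$ is a $\sigma_1$-guided match of $\game$ reaching a leaf $l$, prolonged by a $\tau_l$-guided match of $h(o(l))$, so the power realised by $\sigma$ is $\bigcup_{l \in L} P_l$, where $L$ is the set of leaves of $\game$ reachable under $\sigma_1$ and $P_l \in R_\player(h(o(l)))$ is the power realised by $\tau_l$. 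Now $\{\, o(l) : l \in L \,\}$ is exactly the power realised by $\sigma_1$ --- call it $V$ --- so $V \in R_\player(\game)$; grouping the leaves of $L$ by their outcome and invoking the Lemma below, $\bigcup_{l \in L,\, o(l) = v} P_l$ lies again in $R_\player(h(v))$, so the power realised by $\sigma$ has the form $\bigcup_{v \in V} Q_v$ with $Q_v \in R_\player(h(v))$. The converse direction is immediate (fix a relational strategy realising $V$, realise $Q_{o(l)}$ in each reachable $h$-copy, and use an arbitrary relational strategy --- one exists by Non-emptiness --- in the others). Hence
\[ R_\player(\game[h]) = \{\, \bigcup_{v \in V} Q_v : V \in R_\player(\game),\ Q_v \in R_\player(h(v)) \text{ for every } v \in V \,\}, \]
which depends only on $R_\player(\game)$ and on the assignment $v \mapsto R_\player(h(v))$; statement (b) follows at once.

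The remaining ingredient is:

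\emph{Lemma.} For every finite two-player game $\game$ and each player $\player$, the set $R_\player(\game)$ is closed under finite non-empty unions.

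To prove it, identify a relational strategy for $\player$ with the subtree of the game tree obtained by deleting, at every $\player$-node, all children but some non-empty set of them (and keeping all children at opponent's nodes); the realised power is the set of outcomes at the leaves of this subtree. Given relational strategies $\sigma$ and $\tau$, with surviving-node sets $T_\sigma$ and $T_\tau$, let $\rho$ keep, at a $\player$-node $w$: the children that $\sigma$ keeps if $w \in T_\sigma$, the children that $\tau$ keeps if $w \in T_\tau$, and a single arbitrary child if $w$ lies in neither. An induction on depth shows that every node surviving in $\rho$ lies in $T_\sigma \cup T_\tau$; moreover $\rho$ retains all of $\sigma$'s choices along any $\sigma$-guided match and all of $\tau$'s along any $\tau$-guided match, so the leaves surviving in $\rho$ are precisely those surviving in $\sigma$ together with those surviving in $\tau$, and $\rho$ realises $R_\sigma \cup R_\tau$. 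Finite non-empty unions follow by iteration.

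I expect the Lemma to be the main obstacle, and within it the requirement that $\rho$ respect the information partition: the set of nodes a strategy ``uses'' need not be a union of information cells, so the naive choice of $\rho$ may assign conflicting move-sets within one cell. For perfect-information games --- the setting of Figure 2, and the case where this difficulty disappears --- the construction goes through as written; in the general case one must combine $\sigma$ and $\tau$ cell-by-cell with more care (or impose the corresponding restriction). A minor further point is to pin down the information structure of the substituted game $\game[h]$ so that the strategy decomposition used in (b) is literally valid.
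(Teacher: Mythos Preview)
The paper states this proposition without proof, so there is no argument of its own to compare against. Your approach---reducing to a compositional description of $R_\player$ on compound games---is exactly what the paper has in mind: the substitution formula you derive for $\game[h]$ is the clause the paper writes down a few lines later for $R^\player_{\game_1\circ\game_2}$ (phrased via the Egli--Milner lifting), and your key Lemma is precisely the \emph{Union Closure} condition that the paper then isolates in its representation theorem for relational basic powers. So you have located the same pivot the authors rely on, and your treatment of $+,\times,-$ and of the substitution step is correct once the Lemma is available. One small clarification: at a node $w\in T_\sigma\cap T_\tau$ your $\rho$ must keep the \emph{union} of $\sigma$'s and $\tau$'s children; otherwise $\tau$-guided matches need not survive.

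Your worry about the information partition is not a technicality but a genuine obstruction. Union Closure can fail for imperfect-information games: let $\abel$ first choose $p_1$ or $p_2$; put $p_1,p_2$ in one $\eloi$-cell with moves $1,2$; let $p_1\!\cdot\!2$ and $p_2\!\cdot\!1$ be leaves with outcomes $x,y$; put $p_1\!\cdot\!1$ and $p_2\!\cdot\!2$ in a second $\eloi$-cell with moves $\alpha,\beta$ and leaf outcomes $a,b$ and $c,d$ respectively. Then $\{a,y\}$ and $\{x,d\}$ lie in $R_\eloi$, but $\{a,x,y,d\}$ does not (any strategy reaching $a$ plays $\alpha$ at the second cell and so yields $c$, not $d$, on the other branch). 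Plugging this game in as $h(v)$ and comparing the one-leaf game with outcome $v$ against the $\eloi$-rooted two-leaf game with both outcomes $v$---which are $\equiv$---actually breaks the congruence for $\circ$. The paper concedes in Section~7.1 that its operations handle ``only a special class of imperfect information games''; read the proposition in that spirit. For perfect-information games (and for compositions where distinct copies of the inner game sit in disjoint information cells), your construction of $\rho$ is valid and your proof is complete.
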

\begin{definition}
The \emph{strong dynamic game algebra} is the quotient 
$ \langle \mathbb{D}(\{\eloi,\abel\},O), +,\times,-,\circ\rangle/\equiv$.
\end{definition}

In this game algebra, we get a number of interesting valid equations known from the game algebra of power equivalence:
\begin{proposition}
\label{p:equations-comp}
The following equations hold in $\mathfrak{D}$:
\begin{description}
\item[Associativity] $x \circ (y \circ z) = (x \circ y) \circ z$
\item[Dualization] $-(x \circ y) = (-x)\circ (-y)$
\item[Left Distribution] $(x + y)\circ z = (x \circ z) + (y \circ z) $
\end{description}
\end{proposition}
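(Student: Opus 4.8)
The plan is to verify that each of the three identities already holds, up to isomorphism, at the level of concrete dynamic games, and then to observe that this is more than enough for the quotient algebra. Indeed, isomorphic (dynamic) games are in particular semi-strongly power equivalent --- an isomorphism of extensive games carries relational strategies to relational strategies and preserves outcomes, hence preserves the sets $R_\player$, and this extends componentwise to dynamic games --- and by the preceding proposition $\equiv$ is a congruence, so $\mathfrak{D}$ is a genuine algebra in which the identities then make sense and hold. It therefore suffices, fixing arbitrary $g_1,g_2,g_3 \in \mathbb{D}(\{\eloi,\abel\},O)$ and an arbitrary state $u \in O$, to exhibit an isomorphism of extensive games between the two sides evaluated at $u$.

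For \textbf{Associativity} I would unfold the definition of $\circ$ twice on each side and track leaves through the grafting. On the left, $\bigl((g_1 \circ g_2)\circ g_3\bigr)(u)$ is built from $g_1(u)$ by first grafting a copy of $g_2(o_1(l))$ at each leaf $l$ (with $o_1$ the outcome map of $g_1(u)$), and then grafting a copy of $g_3$-of-outcome at each leaf of that intermediate tree; but a leaf of the intermediate tree is precisely a pair of a leaf $l$ of $g_1(u)$ and a leaf $m$ of the grafted copy $g_2(o_1(l))$, with outcome $o_2(m)$, so the second grafting attaches $g_3(o_2(m))$ there. Comparing with $\bigl(g_1 \circ (g_2 \circ g_3)\bigr)(u)$, which grafts $(g_2 \circ g_3)(o_1(l))$ at each leaf $l$ of $g_1(u)$, one sees node for node that both are $g_1(u)$ with each leaf $l$ replaced by ``$g_2(o_1(l))$ with each of its leaves $m$ replaced by $g_3(o_2(m))$'', and that turn assignments, outcome maps and information partitions agree throughout. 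Hence the two dynamic games are componentwise isomorphic.

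For \textbf{Dualization}, recall that $-\game$ leaves the tree, the outcomes and the partition untouched and only replaces each turn label $a$ by $\overline{a}$, while $\circ$ never alters turn labels inside grafted pieces; so $-\bigl(g_1 \circ g_2\bigr)(u)$ is the composed tree with every turn label flipped, which is exactly what one gets by grafting the turn-flipped $g_2$-pieces onto the turn-flipped $g_1(u)$, i.e. $\bigl((-g_1) \circ (-g_2)\bigr)(u)$. For \textbf{Left Distribution}, note that $(g_1 + g_2)(u)$ consists of a fresh root owned by $\eloi$ with left subtree isomorphic to $g_1(u)$ and right subtree isomorphic to $g_2(u)$, and that its leaves are exactly those of $g_1(u)$ and of $g_2(u)$ with unchanged outcomes; grafting $g_3$-of-outcome at each leaf therefore turns the left subtree into $(g_1 \circ g_3)(u)$ and the right subtree into $(g_2 \circ g_3)(u)$, while the new $\eloi$-root (a singleton partition cell) is untouched --- and that is exactly $\bigl((g_1 \circ g_3) + (g_2 \circ g_3)\bigr)(u)$.

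The argument is entirely bookkeeping; the only point needing care --- the main obstacle, such as it is --- is making the leaf correspondence in the nested grafting of the Associativity case fully precise, i.e. being explicit about the canonical isomorphisms used to identify each grafted copy of $g_2(o_1(l))$ with $g_2(o_1(l))$ itself and composing these along the two levels of grafting. It is also worth recording, as a sanity check on the statement, that no ``right distribution'' law $z \circ (x + y) = (z \circ x) + (z \circ y)$ is being claimed: once $z$ has been played out, the continuation ($x$ or $y$) would have to be selected at each leaf separately rather than at a single $\eloi$-node, and this law indeed fails, consistently with the proposition listing only the left version.
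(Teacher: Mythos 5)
Your proof is correct, and it follows the natural route the paper leaves implicit (the proposition is stated there without proof): verifying that each identity already holds up to isomorphism of the grafted game trees componentwise, and then observing that isomorphism preserves relational basic powers, hence semi-strong power equivalence, so the identities descend to the quotient $\mathfrak{D}$. The bookkeeping you flag (the leaf correspondence under nested grafting, and the fresh $\eloi$-root with its singleton cell in the distribution case) is exactly the only content, and your treatment of it is sound.
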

Note also that some equations that were not valid in the functional setting now become valid, such as idempotence: $x + x = x$. In this way the algebra is closer to the algebra of games under power equivalence, but does not collapse to it: the distribution law $ x \times (y + z) = (x \times y) + (x \times z)$ still fails, for example.

Relational strategies improve our game algebra. At the same time, our earlier results go through with suitable modifications. In particular, our representation theorem for basic powers is easily amended to capture relational basic powers. Consider the following condition on pairs $F_\eloi,F_\abel \subseteq \psf(O)$:
\smallskip
\begin{description}
\item[Union Closure] For any non-empty family of relational basic powers $M \subseteq F_\eloi$ ($M \subseteq F_\abel$), we have that $\bigcup M  \in F_\eloi$ ($\bigcup M \in F_\abel$).
\end{description}
\begin{theorem}
Let $F_\eloi,F_\abel$ be two families of subsets of $O$. Then $F_\eloi,F_\abel$ satisfy Non-emptiness, Consistency, Instantiatedness and Union Closure if, and only if there is a game $\game$ such that $F_\eloi = R_\eloi(\game)$ and $F_\abel = R_\abel(\game)$.
\end{theorem}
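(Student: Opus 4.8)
The plan is to derive this from the representation theorem for (pure) basic powers, Theorem~\ref{representation}, together with one auxiliary observation: for every game $\game$ and every player $\player$, the family $R_\player(\game)$ of relational basic powers is exactly the closure of the family $B_\player(\game)$ of pure basic powers under nonempty unions. Given this observation, both directions of the theorem follow almost immediately. For ``only if'', the direction of Theorem~\ref{representation} stating that the basic powers of a game satisfy the conditions gives that $(B_\eloi(\game),B_\abel(\game))$ satisfies Non-emptiness, Consistency and Instantiatedness; each of these is preserved when one closes under nonempty unions --- Non-emptiness trivially; Instantiatedness since a witness $x\in\bigcup M$ already lies in some $P_0\in M\subseteq B_\eloi(\game)$; Consistency since $\bigcup M\supseteq P_0$ and $\bigcup N\supseteq Q_0$ for members $P_0\in B_\eloi(\game)$, $Q_0\in B_\abel(\game)$ which already meet --- while Union Closure holds for the closure by construction. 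For ``if'', a pair $(F_\eloi,F_\abel)$ satisfying all four conditions in particular satisfies Non-emptiness, Consistency and Instantiatedness, so Theorem~\ref{representation} yields a game $\game$ with $B_\player(\game)=F_\player$ for each $\player$; by the auxiliary observation $R_\player(\game)$ is then the closure of $F_\player$ under nonempty unions, which equals $F_\player$ itself because $F_\player$ is nonempty and already closed under nonempty unions by Union Closure.

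To establish the auxiliary observation I would pass to strategic form, which is legitimate here for the same reason as in Theorem~\ref{representation}: every extensive game has a strategic normal form and conversely, and the translation is compatible with relational strategies (outcome-wise a relational strategy is the union of its pure refinements, and pure strategies are the same in the two presentations). In a strategic-form game $(\Sigma_\eloi,\Sigma_\abel,o)$ a relational strategy for $\player$ is simply a nonempty set $T\subseteq\Sigma_\player$, and its relational basic power is $\bigcup_{\sigma\in T}P_\player(\sigma)$, where $P_\player(\sigma)$ is the instantiated power of the pure strategy $\sigma$, i.e.\ the set of outcomes $o(\sigma,\tau)$ as $\tau$ ranges over the opponent's strategies. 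Hence $R_\player(\game)=\{\bigcup_{\sigma\in T}P_\player(\sigma)\mid \emptyset\neq T\subseteq\Sigma_\player\}$, which is precisely the closure of $B_\player(\game)=\{P_\player(\sigma)\mid\sigma\in\Sigma_\player\}$ under nonempty unions. One could also argue directly on extensive games: the inclusion $R_\player(\game)\subseteq$ closure follows since an outcome realized by a $\sigma$-guided match $m$ is realized by the pure strategy following $m$ at its $\player$-nodes, and the reverse inclusion by combining a nonempty family $\{\pi_i\}$ of pure strategies into the relational strategy which at a $\player$-node $u$ allows exactly the moves $\pi_i(u)$ for those $i$ whose strategy agrees with the history of $u$ (and an arbitrary move when none does), so that its guided matches are exactly the matches guided by some $\pi_i$.

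The crux of the argument is this auxiliary observation --- equivalently, that the family of relational powers is a strategic-normal-form invariant --- and the nontrivial half of it is that every nonempty union of pure basic powers is itself realized by a \emph{single} relational strategy. The naive candidate, the union $\bigcup_i\pi_i$ of the pure strategies regarded as relations, does not work: it admits matches that switch between different $\pi_i$ at different depths, producing a strictly larger outcome set. This is what forces the history-sensitive construction above, and one must additionally check that that construction respects the information-partition constraint on strategies. Passing to strategic form sidesteps this delicacy entirely, at the cost of invoking the standard equivalence between extensive and strategic games, now applied to relational strategies.
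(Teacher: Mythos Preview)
The paper does not supply a proof; it only remarks that the representation theorem for basic powers ``is easily amended'' to the relational case, suggesting a direct modification of the strategic-form construction in Theorem~\ref{representation}. Your route is genuinely different: rather than reworking that construction, you invoke Theorem~\ref{representation} as a black box and reduce everything to the auxiliary lemma that $R_\player(\game)$ is the closure of $B_\player(\game)$ under nonempty unions. This is more modular---the ``if'' direction reuses the very game built in Theorem~\ref{representation} (which is strategic-form, so your lemma applies cleanly there), and the ``only if'' direction becomes a short check that Non-emptiness, Consistency and Instantiatedness survive closure under nonempty unions, which you carry out correctly.

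One point deserves more care than you give it. Your justification for the auxiliary lemma via strategic form rests on the claim that ``outcome-wise a relational strategy is the union of its pure refinements''. This is exactly what is needed to transport relational basic powers along the strategic-normal-form translation, and it is also what drives your direct history-sensitive construction. But it is not automatic from the paper's formal definition of extensive games: if an information cell contains two \emph{comparable} nodes (absent-mindedness), a relational strategy can guide matches that switch moves within a single cell, yielding outcomes not realized by any pure refinement. In such a game the lemma fails, and in fact Union Closure can fail for $R_\player(\game)$ itself. Under the standard convention that information cells contain only incomparable nodes---which the paper tacitly adopts when it declares strategic and extensive presentations interchangeable---your argument goes through, and your remark that the partition constraint must be checked in the history-sensitive construction is then discharged by perfect recall (nodes in the same cell have the same sequence of own-player moves in their histories, so ``$\pi_i$ consistent with the history of $u$'' is cell-invariant). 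You should make this assumption explicit.
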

 
Finally, our basic game logic can also be extended with game terms to capture this algebraic reasoning, in the style of dynamic game logic, \cite{parikh85}, \cite{vB14}. We will then have instantial modalities describing basic powers of player $i$ in game $G$: 

 \medskip
$[G, i](\Psi;\varphi)$
\medskip

\noindent This formalism can be interpreted on our instantial game models, when we provide these with world-dependent basic power relations for all players. The crucial point here is that, with the earlier obstacle to compositionally overcome, we can define the power relation for a product game $\game_1 \circ \game_2$ in the following inductive manner.

\medskip
$(u,Z) \in R_{\game_1 \circ \game_2}^\player$ iff $Z = \bigcup F$, for some family $F \subseteq \psf W$  and some $Y \subseteq W$ with $(u,Y) \in R_{\game_1}^\player$ and $(Y,F) \in \widetilde{R}_{\game_2}^{\;\player}$. 
\medskip

This is the basis for obvious recursion axioms for the game operations that lead to the following result.

\begin{theorem}
The dynamic game logic of relational basic powers is completely axiomatizable.
\end{theorem}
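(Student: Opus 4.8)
The plan is to extend the completeness proof of Theorem \ref{completeness} (the IGL completeness theorem) along the lines already sketched in Section 6, by building a dynamic version of the logic with game terms $[G,i](\Psi;\varphi)$ where $G$ is built from atomic games via $+,\times,-,\circ$, and proving completeness relative to a suitable class of instantial game models equipped with world-dependent basic power relations $R^{\player}_G$. The semantics is given by mutual recursion on the game term: atomic games get relations directly from the model, $+,\times,-$ are handled by the obvious pointwise operations on neighborhood relations (choice for $\eloi$ corresponds to taking unions of the two neighborhood systems plus the appropriate non-emptiness/instantiatedness repair, dual swaps the two players' relations, etc.), and sequential composition $\game_1 \circ \game_2$ uses exactly the clause displayed just before the theorem statement: $(u,Z) \in R^{\player}_{\game_1 \circ \game_2}$ iff $Z = \bigcup F$ for some $F \subseteq \psf W$ and some $Y$ with $(u,Y) \in R^{\player}_{\game_1}$ and $(Y,F) \in \widetilde{R}^{\;\player}_{\game_2}$.

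The key steps I would carry out are as follows. First, fix the syntax: game terms over a finite set $\tau$ of atomic games, and IGL-formulas with modalities $[G,i](\Psi;\varphi)$ for $G$ a game term. Second, write down recursion (reduction) axioms translating every compound-game modality into a Boolean combination of modalities over strictly simpler game terms; for $\circ$ this uses the displayed union-of-Egli-Milner-image clause, which on the syntactic side becomes something like $[\game_1 \circ \game_2, \player](\Psi;\varphi) \leftrightarrow [\game_1,\player](\{[\game_2,\player](\psi;\varphi) : \psi \in \Psi\}; [\game_2,\player]\varphi)$ modulo the bookkeeping needed to match instantial witnesses — this is the delicate axiom to get exactly right. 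Third, show these reduction axioms are sound (routine, by unfolding the recursive semantic clause) and that they let us eliminate all compound game terms, reducing any formula to an IGL-formula over atomic games alone. Fourth, invoke Theorem \ref{completeness}: a formula valid over the dynamic models is, after reduction, a validity of plain IGL, hence IGL-provable, hence (pulling back through the provable reduction equivalences) provable in the extended system. The finite model property / decidability corollary then follows as before, since reduction is effective and preserves depth-boundedness up to a computable bound.

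The main obstacle I expect is twofold. The conceptual part is verifying that the recursively defined relations $R^{\player}_G$ genuinely satisfy Non-emptiness, Consistency, and Instantiatedness at every world — i.e. that the class of instantial game models is closed under the semantic game operations — which is precisely the content that makes relational (rather than functional) strategies necessary: the Union Closure condition from the last representation theorem is exactly what guarantees the $\circ$-clause produces a legitimate basic-power relation, so one must carry Union Closure as a standing frame condition and check it is preserved by $+,\times,-,\circ$. The technical part is pinning down the $\circ$ reduction axiom so that the instantial witnesses $\Psi$ on the left correspond correctly to the nested witnesses on the right; getting the interaction between "$Z = \bigcup F$" and "$Z \cap \tset{\psi_i} \neq \emptyset$" right requires that each $\psi_i$ be witnessed inside \emph{some} member of $F$, which translates to a witness of the form $[\game_2,\player](\psi_i;\varphi)$ being forced somewhere in the $\game_1$-image $Y$ — this matches the shape of IGL's instantial modality, but the proof that the axiom is complete (not just sound) over plain IGL is where the real work lies, and it is essentially a matter of combining the normal form Lemma \ref{normalformlemma} with the explicit model construction used for Theorem \ref{completeness}, now relativized to the reduced game-term-free fragment.
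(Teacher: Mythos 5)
Your proposal matches the paper's own route: the paper proves this theorem only in outline, precisely by equipping models with world-dependent basic power relations, using the displayed Egli--Milner clause for $\circ$, and invoking ``obvious recursion axioms'' that reduce compound game modalities to the base logic, whose completeness is Theorem \ref{completeness}; your composition axiom $[\game_1 \circ \game_2, \player](\Psi;\varphi) \leftrightarrow [\game_1,\player](\{[\game_2,\player](\psi;\varphi) : \psi \in \Psi\}; [\game_2,\player]\varphi)$ is in fact the correct one for that clause, and your point about Union Closure being the frame condition that keeps the model class closed under the operations is exactly the role relational strategies play in the paper. So the proposal is correct and takes essentially the same approach, worked out in somewhat more detail than the paper itself gives.
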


This logic has interesting further properties that deviate from known systems, especially in its axiomatization of game iteration, which we will treat in a separate publication.

\section{Further directions}

In this final section we briefly consider some directions for future reseach, in particular:
\begin{itemize}
\item Enriching the logic with epistemic modalities to reason about imperfect information.
\item Incorporating preference into the framework of basic powers and instantial game logic.
\item Placing instantial game logic and strong power equivalence in a wider view perspective on the many possible game logics and notions of game equivalence. 
\end{itemize} 

\subsection{Imperfect information}
In this paper, we have worked with imperfect information games from the start. This raises some issues of intuitive interpretation. Imperfect information in games can arise for quite different reasons: players'  limited powers of observation of moves, but also players' uncertainty about the strategies of other players. All this brings in what players \emph{know} about the game, and a richer game logic reflecting this would have to incorporate epistemic modalities. Moreover, imperfect information sits somewhat uneasily with our game algebra, since information sets can cross between subgames, disrupting the obvious compositional structure. We have side-stepped this issue in the definition for our game operations, but at the price of dealing with only a special class of imperfect information games. Clearly, a lot remains to be clarified.

\subsection{Preference}

This paper has studied `game forms' with abstract outcomes without any specified preference ordering. A natural next step is to consider proper games in which the players have preference orders over the set of outcomes. This is necessary to connect the game equivalences we have considered here with standard game theoretic concepts such as Nash equilibrium or solution methods like the elimination of dominated strategies. 



Adding modalities for preference is a well-known device in game logics, so we could also do this. But preference does raise questions for our perspective. For instance, the games in Figure 1 have different Backward Induction  solutions with preference $2 < 1 < 3$ for $\abel$, $1 < 3 < 2$ for $\eloi$. We may have to redefine our basic powers in the presence of preference, considering only preference-optimal sets for a player. But solution methods like Backward Induction also incorporate one particular view of rationality: they make an assumption about agents, rather than being part of the neutral mathematics of the game. Perhaps we need to study game equivalences parametrized to particular types of player.

\subsection{A multitude of game logics}

This paper does not claim that there is one best level for viewing games. Extensive form, standard powers, or strategic form all have their virtues, and we have merely claimed that there is room for one more natural new option. All these levels come with their own logical languages matching the invariance relation, \cite{vB14}: relational modal logic for extensive games, modal neighborhood logic for standard powers, instantial neighborhood logic for our basic powers, and multi-modal logics accessing the different dimensions of matrix games. 

This raises a systematic question. How are all these different logics related, given natural transformations from one level of game structure to another? For instance, on finite games, the modal logic of forcing powers can be translated into a $\mu$-calculus on the underlying extensive games, and the same is true of our instantial modalities for basic powers, using the observations in Section 6. But modalities for strategic form games are not easily compared with our forcing modalities: moving across rows or columns means considering alternative strategies for players, something that would require a serious extension of our forcing language. In addition, matrix logics have surprising features that have no counterpart in our forcing logics, such as the undecidability of the full system for three players, which reflect the undecidability of the product logic $S5 \times S5 \times S5$ \cite{Maddux80}.

We believe that systematizing the total picture of game logics and their interrelations holds great interest, but we must leave this for further investigation.


{
\bibliographystyle{eptcs}
\bibliography{NBD}
}

\end{document}